\documentclass[a4paper,UKenglish,cleveref, autoref, thm-restate]{lipics-v2021}

\pdfoutput=1 
\hideLIPIcs  

\usepackage{amsmath,amssymb,amsthm,mathtools}
\usepackage[small]{complexity}
\usepackage{xspace}
\usepackage{colortbl}
\usepackage[dvipsnames]{xcolor}

\usepackage{ebproof}
\usepackage{graphicx}

\usepackage{makecell}

\usepackage{tikz}
\usetikzlibrary{automata, arrows.meta, positioning, shapes, quotes, cd}

\tikzset{> = {Stealth[length=2.9mm, width=1.25mm, inset=0.48mm]}}

\DeclarePairedDelimiter\set{\lbrace}{\rbrace}
\DeclarePairedDelimiter\abs{\lvert}{\rvert}

\DeclarePairedDelimiter\parens{(}{)}
\DeclarePairedDelimiter\tuple{\langle}{\rangle}

\DeclarePairedDelimiterX{\map}[2]{\lbrace}{\rbrace}{#1 \,\delimsize\vert\,\mathopen{} #2}
\DeclarePairedDelimiterX{\cond}[2]{[}{]}{#1 \,\delimsize\vert\,\mathopen{} #2}
\DeclarePairedDelimiterX{\subrange}[2]{[}{]}{#1 :\mathopen{} #2}
\DeclarePairedDelimiterX{\range}[2]{[}{]}{#1 ..\mathopen{} #2}

\newcommand{\BigO}{\mathcal{O}}

\renewcommand{\phi}{\varphi}

\title{The Power of Small Symmetries} 

\author{Nikita Gaevoy}
{The Faculty of Data and Decision Sciences, Technion, Israel}
{nikgaevoy@gmail.com}{}{}

\authorrunning{N. Gaevoy} 

\Copyright{Nikita Gaevoy} 

\ccsdesc[500]{Theory of computation~Proof complexity}

\keywords{proof complexity, complexity lower bounds, resolution with symmetries, small symmetries} 

\nolinenumbers 

\EventEditors{John Q. Open and Joan R. Access}
\EventNoEds{2}
\EventLongTitle{42nd Conference on Very Important Topics (CVIT 2016)}
\EventShortTitle{CVIT 2016}
\EventAcronym{CVIT}
\EventYear{2016}
\EventDate{December 24--27, 2016}
\EventLocation{Little Whinging, United Kingdom}
\EventLogo{}
\SeriesVolume{42}
\ArticleNo{23}

\newlang{\bphp}{bPHP}
\newlang{\ofphp}{ofPHP}
\newlang{\gop}{GOP}

\usepackage{xparse}

\NewDocumentCommand{\SRCI}{o}{
    \IfNoValueTF{#1}
    {\text{SRC-I}}
    {{#1}\text{-SRC-I}}%
}
\NewDocumentCommand{\SRCII}{o}{%
    \IfNoValueTF{#1}
    {\text{SRC-II}}
    {{#1}\text{-SRC-II}}%
}
\NewDocumentCommand{\SRCIII}{o}{%
    \IfNoValueTF{#1}%
    {\text{SRC-III}}%
    {{#1}\text{-SRC-III}}%
}
\NewDocumentCommand{\SRI}{o}{%
    \IfNoValueTF{#1}%
    {\text{SR-I}}%
    {{#1}\text{-SR-I}}%
}
\NewDocumentCommand{\SRII}{o}{%
    \IfNoValueTF{#1}%
    {\text{SR-II}}%
    {{#1}\text{-SR-II}}%
}
\NewDocumentCommand{\SRIII}{o}{%
    \IfNoValueTF{#1}%
    {\text{SR-III}}%
    {{#1}\text{-SR-III}}%
}
\NewDocumentCommand{\HRI}{o}{%
    \IfNoValueTF{#1}%
    {\text{HR-I}}%
    {{#1}\text{-HR-I}}%
}
\NewDocumentCommand{\HRII}{o}{%
    \IfNoValueTF{#1}%
    {\text{HR-II}}%
    {{#1}\text{-HR-II}}%
}
\NewDocumentCommand{\HRIII}{o}{%
    \IfNoValueTF{#1}%
    {\text{HR-III}}%
    {{#1}\text{-HR-III}}%
}
\newcommand{\Res}{\text{Res}\xspace}

\begin{document}

\maketitle


\begin{abstract}
Resolution with symmetries is a natural extension of the Resolution proof system that allows to use symmetries of the formula to simplify the proof.
Symmetries can be global (applied to the whole input formula), local (applied to a subformula), or dynamic (applied to newly derived clauses as well).
The framework of Resolution with (global) symmetries was introduced by Krishnamurthy~\cite{Krishnamurthy85} and further extended by Arai and Urquhart~\cite{AU00} to local symmetries.
Later, Szeider~\cite{Szeider05} generalized this approach to homomorphisms and introduced the notion of Resolution with dynamic symmetries.
While proving superpolynomial proof-size lower bounds for Resolution with dynamic symmetries remains an open problem already for two decades, the power of proof systems with global and local symmetries is well studied: exponential lower bounds have been proven for these proof systems, as well as exponential separations between all of them.
However, these systems are too general to reflect practical applications since it is computationally too hard to find and efficiently exploit arbitrary symmetries.

In this work, we introduce the notion of small symmetries: symmetries that can operate on a limited number of variables at the same time.
Resolution with small symmetries gives hopes both for practical applications and for theoretical study of dynamic symmetries.
We show that proof systems with both local and global small symmetries form strict hierarchies w.r.t. the size of symmetries.
We prove exponential separations between proof systems with symmetries of different sizes and types.
It turns out that even lower levels of these hierarchies are exponentially separated from Resolution and stronger proof systems, such as constant-depth Frege.
As a byproduct of our constructions, we obtain an exponential separation between the classical systems $\SRCI$ and $\SRII$ that was not known before.
\end{abstract}


\section{Introduction}\label{sec:introduction}

Many families of formulas based on combinatorial principles that are hard for Resolution, including pigeonhole and Tseitin formulas~\cite{BSW01,Haken85,Urquhart87}, have a very symmetric nature.
One of the first theoretical approaches to use these symmetries to optimize the length of Resolution-like proofs was Krishnamurthy's symmetry rule~\cite{Krishnamurthy85}.
This rule allows one to infer a clause symmetric to a previously deduced clause in one step, if the formula itself is symmetric under the same symmetry.
Resolution with this additional rule is called Resolution with global ($\SRI$) or local ($\SRII$) symmetries, depending on whether the whole input formula or only a part must be symmetric.
In both cases, symmetries are applied only to the original formula (not inferred clauses), so we call them \emph{static}.
Urquhart~\cite{Urquhart99} generalized these proof systems by introducing Resolution with \emph{complementary} symmetries ($\SRCI$ and $\SRCII$) that additionally allowed to interchange literals with their negations.

These systems were further extended by Szeider~\cite{Szeider05} in two important directions: Resolution with homomorphisms ($\HRI$ and $\HRII$) and Resolution with \textit{dynamic} symmetries ($\SRIII$, $\SRCIII$, and $\HRIII$).
Resolution with homomorphisms allows the usage of any homomorphism of a formula in place of a symmetry, which makes $\HRI$ and $\HRII$ even more general than their SR and SRC counterparts.
Resolution with dynamic symmetries (homomorphisms) is an extension that allows using inferred clauses in the symmetry rule.
This yields three new proof systems: $\SRIII$, $\SRCIII$, and $\HRIII$, depending on whether we are allowed to use symmetries, complementary symmetries, or homomorphisms.

\subsection{Previous work}

For all six proof systems with global and local symmetries and homomorphisms, exponential lower bounds on the length of their proofs are known (see Figure~\ref{fig:systems}).
Moreover, for Resolution with local homomorphisms ($\HRII$), the strongest proof system among these six, Szeider~\cite{Szeider05} showed that every formula could be transformed so that the lower bounds for the original formula in Resolution could be translated into lower bounds for the new version of the formula in $\HRII$.
However, it remains an open problem to prove superpolynomial lower bounds on proof length even in $\SRIII$, the weakest of the three proof systems with dynamic symmetries.
In fact, Resolution with dynamic symmetries is so much stronger than Resolution with static (i.e.,\ global or local) symmetries that, for some formulas, it is interesting whether using static symmetries only is sufficient to achieve polynomial proof length~\cite{Schweitzer21}.

\begin{figure}
    \centering

    \begin{tabular}{|l|c|c|c|}
        \hline
        & global & local & dynamic \\
        \hline
        symmetries & $\SRI$~\cite{Urquhart99} & $\SRII$~\cite{AU00} & $\SRIII$~(open) \\
        \hline
        \makecell[l]{symmetries + \\ complementation}& $\SRCI$~\cite{Urquhart99} & $\SRCII$~\cite{AU00} & $\SRCIII$~(open) \\
        \hline
        homomorphisms & $\HRI$~\cite{Szeider05} & $\HRII$~\cite{Szeider05} & $\HRIII$~(open) \\
        \hline
    \end{tabular}

    \caption{Resolutions with symmetries of varying degree of generality and corresponding exponential lower bounds. The strength of proof systems increases from top to bottom and from left to right.}
    \label{fig:systems}
\end{figure}

Despite the theoretical nature of this approach, some restricted versions of these proof systems could be found in practical SAT solvers.
Several such methods are jointly called dynamic symmetry handling~\cite{DBB17} or dynamic symmetry exploitation~\cite{Strichman21}, as opposed to symmetry breaking, the collection of methods that can add clauses to a formula that may affect the set of satisfying assignments as long as the satisfiability stays intact.
Symmetry breaking approaches also could be formalized and are promising for studying in the framework of proof systems~\cite{Bogaerts22}.
Although most solvers that use dynamic symmetry handling lose the competition to solvers with modern symmetry-breaking heuristics~\cite{DBB17}, the latter approach requires the symmetries to be essentially global.
In contrast, dynamic symmetry handling heuristics may efficiently work with local symmetries, which is crucial for almost symmetric formulas.
Moreover, in some practical cases, such as problems that emerge from bounded model checking, local symmetries arise naturally and could be generated together with the input instance of the satisfiability problem.
It allows the solver to have a sufficient amount of symmetries to work with without spending much time finding those symmetries, which can significantly improve the performance~\cite{Strichman21}.
In Ramsey theory, many state-of-the-art results are obtained using SAT solvers~\cite{Heule18,Heule16,Konev15,Kouril08}.
Although these solvers frequently employ symmetry breaking, the input formulas naturally contain many local symmetries, suggesting that their use could be a promising direction for further improvement.

\subsection{Our contribution}

In this work, we introduce a notion of small symmetries, that is, symmetries operating only on a small number (this number is a parameter of the proof system) of variables at the same time.
This defines a set of new proof systems that play an intermediate role between Resolution and the original proof systems with symmetries and homomorphisms.
A system with $k$-symmetries (where $k$ could be a constant or a function) is obtained from a system with arbitrary symmetries by restricting all applications of the symmetry rule to symmetries that nontrivially permute at most $k$ variables at a time (for formal definitions refer to Section~\ref{sec:definitions}).
We use the notation $k$-$\Pi$ to denote the proof system $\Pi$ with the restriction on $k$-symmetries (e.g., $\SRI[k]$, $\SRCII[k]$).
The notion of small symmetries arises naturally in the context of practical SAT solving, where it is much harder to detect large symmetries than small ones.
Additionally, many formulas have very symmetric nature, with many small symmetries, but almost no large symmetries.
Examples include formulas constructed from the bounded model checking instances and formulas encoding statements from Ramsey theory.
In the latter case, one can often find symmetries comparable in the size to the forbidden structure, which is often much smaller than the number of variables.

For dynamic symmetries, the notion of small symmetries allows us to define natural restricted versions of the proof system, which could be easier to prove lower bounds for.
It is important to note that even the most restricted system $\SRCIII[1]$ ($\SRIII[1]$ is trivially equivalent to Resolution) is probably strictly stronger than Resolution, even though we don't have any nontrivial upper or lower bounds on it.
In all three cases of global, local, and dynamic symmetries, it is natural to ask how the power of Resolution with small symmetries depends on the number of variables that the symmetries are allowed to operate on.

We focus on the case of small static symmetries.
We consider two different families of proof systems: Resolution with constant-size symmetries and Resolution with logarithmic-size symmetries.
In the case of logarithmic symmetries, we prove that $\SRI[(2 \log n)]$ has a polynomial proof for the formula $\bphp_n$ representing the binary encoding of the pigeonhole principle, which is hard for Resolution.
We also use this bound to show that $\SRI[(k + 1)]$ has exponentially shorter proofs for some formulas than $\SRCII[k]$, where $k$ is odd and at least logarithmic in $n$.
In the case of constant symmetries, we use Tseitin formulas to obtain an exponential separation between $\SRCI[3]$ and $\Res$ to show that for any $k$, $\SRI[\BigO(k)]$ has exponentially shorter proofs for some formulas than $\SRCII[k]$ (see Figure~\ref{fig:diagram_tseitin_triangular} for the precise relations between proof systems).

The hierarchy of proof systems with small symmetries is three-dimensional: we may increase the size of symmetries, add complementary symmetries, or generalize the type of used symmetries.
Therefore, there are three different types of separations that we can obtain, showing that the increase in one parameter makes the system stronger on some formulas, even if the other two decrease.
In our case, we use the separations that increase the size of symmetries to obtain the results of the other two types, and show that these separations hold even if the resulting system is allowed to operate with constant symmetries only (see Figure~\ref{fig:diagram_comb} for the diagram).
As a byproduct of these results, we also obtain an exponential separation between classical systems $\SRCI$ and $\SRII$ that was not known before (see Corollary~\ref{cor:last2}).

\subsection{Our methods}

All previous separations between proof systems with symmetries share the same structure: take a formula with a symmetric structure and then modify it to break one type of symmetries, but maintain another.
The base formulas in these separations are the propositional pigeonhole principle or perfect matching~\cite{Szeider05,Urquhart99}, or formulas that use them as a gadget~\cite{AU00}.
Unfortunately, these formulas allow one to construct proofs with large (polynomial-size) symmetries only.
The second problem arises from the fact that, although they allow using very broad families of formulas, the transformations produce only formulas with large symmetries.

Our separations follow the same path.
However, to overcome the limitations of the previous separations, we consider different families of formulas.
For each family, we construct polynomial-size refutations using only small symmetries.
We then apply transformations designed specifically for these formula types, which allow us to ``lift'' the lower bounds for the Resolution proof system without a significant growth in the size of the required symmetries.

\subparagraph*{Upper bounds.}
Our basic formulas are the binary encoding of the pigeonhole principle and Tseitin formulas.
For both families of formulas we construct refutations that have a dynamic programming-like nature.

For the refutation of $\bphp_n$ we exploit symmetries swapping rows to assume without any loss in the generality that the first column starts with either $\frac{n + 1}{2}$ zeros or $\frac{n + 1}{2}$ ones.
This is sufficient to reduce $\bphp_n$ to a smaller formula $\bphp_{n/2}$, which we solve recursively.

The refutation of Tseitin formulas on grids has a more complex structure, resembling the folklore technique of dynamic programming on broken profile.
We reduce the problem on toroidal grids to the problem on the planar grid, where we structure our proof similarly to a dynamic programming solution to the problem of counting the number of satisfying assignments.
It allows us to partition all possible assignments into a constant number of symmetric classes, which leads to the polynomial-size proof.
Although we consider grids, as they produce the best constants, this technique is applicable to Tseitin formulas built on top of any family of graphs that have a lot of small and well connected cycles.

\subparagraph*{Separations.}
For the separations, we transform our basic formulas by embedding extra variables in the formula structure, thereby increasing the size of the smallest ``useful'' symmetry.
Although the size of the symmetries required to efficiently refute $\bphp_n$ is logarithmic in $n$, the transformation that works with this family of formulas allows us to separate systems where the symmetry sizes differ by as little as $1$.
At the same time, Tseitin formulas have the advantage of being polynomially solved using only constant symmetries, but the transformation lifting the lower bounds for local symmetries leads to a multiplicative blow-up in the size of symmetries in the upper bound.
However, this still allow us to separate $\SRCI[k]$ from $\SRCI[(k + 1)]$ and separate bigger global symmetries from smaller local ones, which would not be possible using previous approaches.
Our separations of proof systems with small symmetries from systems with larger symmetries of different type are also based on the Tseitin formulas.
They share the same core idea, but use more complex constructions.

\subsection{Organization of the paper}

In Section~\ref{sec:binary-php}, we consider formulas based on binary encoding of the pigeonhole principle.
We show that Resolution with global logarithmic symmetries admits polynomial-size proofs for the formula $\bphp$, which has exponential lower bounds in Resolution and in stronger systems such as constant-depth Frege.
Then, we use this bound to study the hierarchy of proof systems with symmetries of at least logarithmic size (see Figure~\ref{fig:diagram_bphp}).

Section~\ref{sec:tseitin-formulas} is dedicated to formulas built on top of Tseitin contradictions on two types of grids, rectangular and triangular.
Using these formulas, we show that Resolution with static symmetries may have short proofs for formulas hard for systems like constant-depth Frege even when the bound on the size of symmetries is as small as the constant three.
Then, we use these bounds to obtain the results on the hierarchy of proof systems with symmetries that are smaller than logarithmic (see Figure~\ref{fig:diagram_tseitin_triangular}).
For the sake of simplicity, we construct some of our proofs for the case of rectangular grid first, obtaining worse parameters than it is possible to obtain in the triangular case (see Figure~\ref{fig:diagram_tseitin_rectangular}), and then generalize them to the triangular grid.

In Section~\ref{sec:types}, we consider separations in the directions opposite to those proven in the previous sections and obtain the relations between proof systems with the smallest symmetry constraints (see Figure~\ref{fig:diagram_comb}).

\section{Preliminaries}\label{sec:definitions}

\begin{definition}
    A mapping $\phi$ of literals into literals is a \emph{$k$-symmetry} if and only if it satisfies the following two conditions:
    \begin{enumerate}
        \item $\phi(\lnot x) = \lnot \phi(x).$
        \item All variables except for at most $k$ are mapped into themselves (i.e. $\phi(x) = x$).
    \end{enumerate}

    We call a $k$-symmetry \emph{positive} if it maps variables into variables without negation.
    Symmetries that are not necessarily positive we call \emph{complementary} symmetries.
    In case $k$ is equal to the number of variables, we omit the parameter $k$ and obtain the classical definitions of symmetries~\cite{Krishnamurthy85,Urquhart99}.
\end{definition}

\begin{definition}
    Let $k \geq 1$ be a constant or a function of the number of variables.
    \emph{Resolution with dynamic $k$-symmetries ($\SRCIII[k]$)} is a proof system with the following rules:
    \begin{enumerate}
        \item Resolution rule.
        \[
            \begin{prooftree}
                \hypo{A \lor x}
                \hypo{B \lor \lnot x}
                \infer2{A \lor B}
            \end{prooftree}
        \]
        The variable $x$ is called the \emph{pivot} of the inference.
        \item Symmetry rule.
        \[
            \begin{prooftree}
                \hypo{\phi(B_1), \phi(B_2), \ldots, \phi(B_l)}
                \infer1[$\phi$]{\phi(A)}
            \end{prooftree}
        \]
        where $\phi$ is a $k$-symmetry and $A$ was inferred from $B_1, B_2, \ldots, B_l$ before.
        Additionally, each time this rule is applied, we write down $B_1, B_2, \ldots, B_l$ and $\phi$, for the purposes of verification.
        We call the inference of $A$ from $B_1, B_2, \ldots, B_l$ the justification of the symmetry rule, $\phi(B_1), \phi(B_2), \ldots, \phi(B_l) \vdash \phi(A)$ the result of the symmetry rule, and $\phi(A)$ the resulting clause.
    \end{enumerate}
    Similarly to Resolution, the proof is a sequence of clauses, where each clause is either an input clause or is inferred from previous clauses by one of the rules above, and the last clause is the empty clause.
\end{definition}

The motivation behind the symmetry rule is as follows.
If $A$ was inferred from $B_1, \ldots, B_l$, and there is a $k$-symmetry $\phi$ mapping all premises to already derived clauses, then applying $\phi$ to the proof yields an inference of $\phi(A)$ from $\phi(B_1), \ldots, \phi(B_l)$. Hence, instead of repeating the proof, we may infer $\phi(A)$ in a single step.

\begin{remark}
    It is possible to define $k$-homomorphisms and the corresponding proof systems.
    Moreover, some of results could be generalized to homomorphisms---either directly or with some additional technical work.
    In this work, however, we restrict our attention to symmetries.
\end{remark}

We define Resolution with $k$-symmetries as a system with dynamic symmetries.
That is, the proof is not restricted to symmetries of the original formula and may instead use symmetries that appear after several inference steps, or in a more complex manner.
However, it is easy to define versions of Resolution with $k$-symmetries that use only static (i.e.\ global or local) symmetries.

\begin{definition}
    \emph{Resolution with local $k$-symmetries ($\SRCII[k]$)} is a Resolution with $k$-symmetries with an additional requirement that, in the symmetry rule, all $B_i$ and all $\phi(B_i)$ are present in the input formula.

    \emph{Resolution with global $k$-symmetries ($\SRCI[k]$)} is a Resolution with $k$-symmetries with an additional requirement that, in the symmetry rule, both sets $\set{B_i}$ and $\set{\phi(B_i)}$ are equal to the set of all input clauses.
\end{definition}

If $n$ is the number of variables, $\SRCIII[n]$ is $\SRCIII$ as defined in~\cite{Szeider05} and $\SRCI[n]$ and $\SRCII[n]$ correspond to $\SRCI$ and $\SRCII$ as defined in~\cite{Urquhart99}, respectively.

\begin{definition}
    \emph{$\SRI[k]$, $\SRII[k]$, and $\SRIII[k]$} are the proof systems $\SRCI[k]$, $\SRCII[k]$, and $\SRCIII[k]$, respectively, with an additional requirement for all symmetries to be positive.
\end{definition}

Analogously, if $k$ is set to $n$, we get the original definitions of $\SRI$, $\SRII$, and $\SRIII$~\cite{Krishnamurthy85,Szeider05}.
Now we need to show that all the proof systems above are valid proof systems in the sense of the following definition of Cook and Reckhow~\cite{CR79}.

\begin{definition}
    A \emph{proof system} for CNF is a polynomial-time algorithm $\Pi$ that accepts a pair of a CNF-formula $\Phi$ and a binary string (i.e.\ refutation or proof) $\xi$ and satisfies the following conditions.

    \begin{itemize}
        \item Completeness: for every unsatisfiable $\Phi$ there exists a refutation $\xi$ such that $\Pi(\Phi, \xi) = 1$.

        \item Soundness: if $\Phi$ does not encode an unsatisfiable formula, then $\Pi(\Phi, \xi) = 0$ for any $\xi$.
    \end{itemize}
\end{definition}

\begin{lemma}
    All defined systems are Cook-Reckhow proof systems.
\end{lemma}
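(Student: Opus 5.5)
We have to establish three properties for every system $\SRCIII[k]$, $\SRCII[k]$, $\SRCI[k]$ and their positive versions: completeness, soundness, and polynomial-time verifiability. \textbf{Completeness} is immediate. Each of these systems contains the Resolution rule, so every Resolution refutation is also a refutation in the system. Completeness of Resolution then gives it for all of them.

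\textbf{Soundness.} It suffices to treat the strongest system, $\SRCIII[k]$ with complementary symmetries and arbitrary $k$, since every other system is a syntactic restriction of it. We prove by induction on the length of the proof that every derived clause is a semantic consequence of the input formula $\Phi$. More precisely, every clause in the proof is implied by the set of clauses listed before it, and hence by $\Phi$. The Resolution rule is sound in the usual way.

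For the symmetry rule, observe that since $\phi(\lnot x)=\lnot\phi(x)$, the map $\phi$ acts on variables as a permutation combined with possible negations. We must check that $\phi$ is in fact a bijection on variables; if the definition allowed non-injective maps we would be dealing with homomorphisms. Even without injectivity the argument goes through. For an assignment $\alpha$, define $\alpha'(x)=\alpha(\phi(x))$, extended to literals; this is consistent because $\phi$ commutes with negation. Then for every clause $C$, $\alpha'$ satisfies $C$ if and only if $\alpha$ satisfies $\phi(C)$. Now suppose $\alpha$ satisfies all of $\phi(B_1),\dots,\phi(B_l)$. Then $\alpha'$ satisfies all $B_i$. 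By the induction hypothesis applied to the earlier justification, which is a sound inference, $\alpha'$ satisfies $A$, and therefore $\alpha$ satisfies $\phi(A)$. So $\phi(A)$ follows from the clauses $\phi(B_i)$, which appear earlier in the proof. The main subtlety is the justification: $A$ was inferred from $B_1,\dots,B_l$ ``before''. We need an invariant stronger than consequence from $\Phi$, namely that each earlier-derived $A$ is a semantic consequence of the premises recorded in its justification. We state the induction over proof lines accordingly, tracking for each line the set of hypotheses it depends on. Consequently an unsatisfiable conclusion (the empty clause) can only be derived from an unsatisfiable $\Phi$.

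\textbf{Polynomial-time verification.} The verifier scans the proof line by line. For a resolution step it checks the pivot. For a symmetry step it reads the recorded $B_1,\dots,B_l$ and $\phi$, then checks the following:
\begin{itemize}
\item $\phi$ is given on at most $k$ variables and respects negation (and, for the positive systems, maps variables to variables);
\item the clauses $B_i$ and $A$ occur earlier, with $A$ inferred from the $B_i$;
\item each $\phi(B_i)$ occurs earlier and the new line equals $\phi(A)$;
\item for the local and global variants, the $B_i$ and $\phi(B_i)$ are input clauses, respectively equal as sets to the whole input.
\end{itemize}
All of these checks run in time polynomial in $|\Phi|+|\xi|$. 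The one point needing care is ``$A$ was inferred from $B_1,\dots,B_l$''. We fix this notion as: the proof contains, before this line, a derivation of $A$ whose leaves are among the $B_i$, as witnessed by the recorded dependency sets. This can be checked by traversing the proof DAG backwards from $A$, which takes polynomial time. I expect this formalization of the justification, used consistently in both soundness and verification, to be the main obstacle. Everything else is routine.
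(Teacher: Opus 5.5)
Your proposal is correct and follows the paper's three-part structure. Completeness holds because every system extends Resolution, verification is polynomial-time because each symmetry application records $\phi$ and $B_1,\ldots,B_l$, and soundness holds because the justification was already established earlier in the proof. Your pullback argument ($\alpha'=\alpha\circ\phi$ plus the invariant that each recorded justification is a semantic consequence of its premises) is a rigorous version of the paper's one-line claim that ``the result of the symmetry rule always encodes a valid inference,'' and it avoids literally applying $\phi$ to the justification, which would require composing symmetries and, under the paper's definition, handling a possibly non-injective $\phi$.
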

\begin{proof}
    The completeness follow from the fact that all systems are extensions of Resolution, which is a Cook-Reckhow proof system.
    Polynomial-time verification is possible since we write down the used symmetry function each time the symmetry rule is applied.
    And the soundness follows from the requirement for the justification to be inferred before the application of the symmetry rule, so the result of the symmetry rule always encodes a valid inference.
\end{proof}

Similarly to Resolution, the size of a proof in systems with symmetries could be measured as both the number of steps and the total length of the proof.
However, those metrics differ only by a polynomial factor, so we will consider only the number of steps.

\begin{definition}
    \emph{$S_{\Pi}(\Phi)$} is the number of steps in the shortest proof of an unsatisfiable formula $\Phi$ in proof system $\Pi$.
\end{definition}

Note that due to the symmetry rule, the size of a proof may be even smaller than the number of clauses in the formula.
An example of such a situation will be given in Theorem~\ref{thrm:bphp_upper}.
Another classical notion we operate with is the notion of p-simulation.

\begin{definition}
    A proof system $A$ \emph{p-simulates} a proof system $B$ if there exists a polynomial-time function $f$ such that $f(\Phi, \xi)$ is a valid $A$-proof of $\Phi$ whenever $\xi$ is a valid $B$-proof of $\Phi$. In other words, $B(\Phi, \xi) = 1 \Rightarrow A(\Phi, f(\Phi, \xi)) = 1$, for all $\Phi, \xi$.

    Two systems simulating each other are called \emph{p-equivalent}.
    Proof system $A$ is \emph{exponentially separated} from $B$ if there exists a family of formulas $\Phi_n$ such that $S_A(\Phi_n) = \BigO(\poly(n))$ and $S_B(\Phi_n) = 2^{\Omega(n)}$.
\end{definition}

\begin{remark}
    A system with small symmetries of a particular type is trivially p-simulated by systems with the larger symmetries of the same type. However, the problem of obtaining separations between such systems is not trivial and will be one of our main goals in this work.
\end{remark}

In our definition, Resolution with $k$-symmetries does not have the weakening rule:
\[
    \begin{prooftree}
        \hypo{A}
        \infer1{A \lor x}
    \end{prooftree}.
\]
Addition of the weakening rule may potentially make Resolution with small dynamic symmetries stronger.
However, in the static case, it was proven by Szeider~\cite{Szeider05} that we can eliminate all applications of the weakening rule similarly to Resolution, and this result can be directly extended to all versions of Resolution with small static symmetries.

\begin{remark}\label{rem:weakening}
    For all $k$, the addition of the weakening rule to $\SRI[k]$, $\SRCI[k]$, $\SRII[k]$, and $\SRCII[k]$ does not change the length of the shortest proof.
\end{remark}

Since we prove only upper bounds on systems with dynamic symmetries, we do not include the weakening rule in the definition.

\begin{lemma}
    $\SRCI[1]$ is p-equivalent to $\Res$.
\end{lemma}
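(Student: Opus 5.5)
One direction is immediate: every Resolution refutation is an $\SRCI[1]$ refutation that never uses the symmetry rule. So the plan concentrates on showing that $\Res$ p-simulates $\SRCI[1]$. A $1$-symmetry $\phi$ fixes all variables but at most one, say $x$, and satisfies $\phi(\lnot x) = \lnot\phi(x)$. Hence $\phi(x) \in \set{x, \lnot x}$, so $\phi$ is either the identity or the map $\sigma_x$ that flips the polarity of $x$ and nothing else. The identity is useless, since any application of it can simply be deleted. So we only need to handle a global symmetry rule with $\phi = \sigma_x$, where $\sigma_x$ maps the set of input clauses $F$ onto itself.

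Under this global condition, I will show that $\sigma_x(A)$ has a Resolution derivation from $F$ for every clause $A$ derivable from $F$, obtained from earlier lines at polynomial cost. The cleanest route uses the semantic fact that $\sigma_x(F) = F$. If $F$ contains a clause in which $x$ occurs, then $F$ also contains its image with $x$ flipped. The key claim is that in this situation one can instead resolve the clause $C \lor x$ with $C \lor \lnot x$ to obtain $C$, for every such pair. Then $F$ is equivalent to the set $F'$ of clauses in which $x$ does not occur, and these clauses are all derivable in one step each.

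Concretely, the simulation is as follows. Let $A$ be derived earlier in the proof. By induction over the proof (each line being an input clause, a resolvent, or a symmetry-rule result), I will show that one can derive in Resolution, within polynomial overhead, a clause $A' \subseteq A \setminus \set{x, \lnot x}$. Here ``subclause of $A$ with $x$ removed'' really means ``subclause of $A$ after deleting the literals on variables whose flips were used''. A cleaner invariant is therefore: for each line $A$, Resolution derives some $A'$ that is a subclause of $A$ with all literals on variables in $X$ removed, where $X$ is the set of variables flipped by global $1$-symmetries used so far. The cases are then handled one at a time.
\begin{itemize}
\item For input clauses $C$ whose variables are fixed, $A' = C$. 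For input clauses containing literals of a variable in $X$, repeatedly resolve with their flipped partners, all of which lie in $F$, to remove those literals.
\item For resolution steps with pivot $y \notin X$, resolve the corresponding subclauses $A'$ and $B'$, or take one of them directly if the pivot is missing.
\item For steps with pivot $y \in X$, the premise subclauses already lack $y$, so we take either one of them.
\item For a symmetry step $\sigma_x(A)$, the clause $A'$ contains no $x$, so it is already a subclause of $\sigma_x(A)$ with $X$ removed.
\end{itemize}
At the end, the empty clause yields the empty clause, and the size grows only polynomially.

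The main obstacle is the input-clause case, and specifically the requirement that the set of clauses whose literals must be stripped stays closed under all the flips in $X$ simultaneously. Since each $\sigma_x$ with $x \in X$ is a global symmetry of $F$, the group they generate preserves $F$. So for a clause $C$ with $j$ variables in $X$, all $2^j$ polarity variants are in $F$. Each step of the elimination halves the number of variants, which gives at most $2^j - 1$ resolutions bounded by the number of clauses in $F$. Including the cost of processing all of $F$ once, this is polynomial.
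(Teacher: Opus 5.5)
Your proposal is correct and follows essentially the same route as the paper. You strip the flippable variables from every input clause by resolving it against its flipped copies (this is the paper's formula $\Psi$), then replay the proof while ignoring symmetry steps, and your subclause invariant does the job of the weakening elimination that the paper invokes at the end. One small repair is needed: $X$ must be fixed in advance, either as all variables flipped anywhere in the proof or, as in the paper, all $x$ for which $\sigma_x$ is a global symmetry of $F$, rather than ``used so far.'' Otherwise the symmetry-step case fails the first time some $x$ is flipped, because an earlier $A'$ may still contain $x$.
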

\begin{proof}
    $\SRCI[1]$ works with global symmetries that can flip only one variable.
    Consider an unsatisfiable formula $\Phi$, and let $S$ be the set of all variables corresponding to the $1$-symmetries of $\Phi$.
    By definition of a global symmetry, each clause of $\Phi$ has its copies with all possible signs of variables from $S$.
    Therefore, we can infer, in a polynomial number of resolution steps, a formula $\Psi$ obtained from $\Phi$ by removing all variables from $S$.

    Now consider the $\SRCI[1]$-proof of $\Phi$.
    Repeating it starting from $\Psi$ and ignoring the symmetry rule, we obtain a proof in Resolution with the weakening rule (since some resolution steps in $\SRCI[1]$ may use pivots from $S$, which are no longer present in $\Psi$ and thus correspond to weakening steps).
    Then, we only need to eliminate all applications of the weakening rule to conclude the proof.
\end{proof}

The approach in this lemma cannot be extended to $\SRCII[1]$ because removal of variables may break some symmetries.
Moreover, it is likely that $\SRCII[1]$ is strictly stronger than Resolution.

\section{Binary PHP}\label{sec:binary-php}

In this section, we consider formulas based on the binary pigeonhole principle and prove the relations between proof systems shown in Figure~\ref{fig:diagram_bphp}.

\begin{definition}
    Let $n$ be a power of $2$.
    \emph{$\bphp_n$} is the following CNF-formula in variables $x_{1, 1}, x_{1, 2}, \ldots, x_{1, \log n}, x_{2, 1}, \ldots, x_{n + 1, \log n}$:
    \[\bigwedge_{\substack{i < j \\ 0 \leq y < n}} (\overline{x_{i, 1} \ldots x_{i, \log n}} \neq y) \lor (\overline{x_{j, 1} \ldots x_{j, \log n}} \neq y).\]

    Here $\overline{x_1 x_2 \ldots x_k}$ denotes the integer number whose binary representation is $x_1 x_2 \ldots x_k$ and $\log$ denotes the binary logarithm.
    Note that the non-equalities in the formula above can be expressed by disjunctions, so the formula is in CNF already.
\end{definition}

Informally, $\bphp_n$ encodes the statement that $n + 1$ pigeons cannot be placed into $n$ holes, where each pigeon $i$ is assigned a hole by its binary string $\overline{x_{i,1} \ldots x_{i, \log n}}$.
Each clause encodes the constraint that two distinct pigeons $i$ and $j$ cannot both be mapped to the same hole $y$.

$\bphp_n$ consists of $\BigO(n \log n)$ variables and $\BigO(n^3)$ clauses.
The formula is unsatisfiable and does not change when we swap any two rows of the matrix $(x_{i, j})$.
$\bphp$ is hard not only for Resolution, but also for stronger proof systems.
For example, $\bphp$ is hard for bounded-depth Frege, since the pigeonhole principle with functional and onto axioms ($\ofphp$) is hard for it~\cite{Krajicek95,Pitassi93}, and it is easy to derive in bounded depth $\bphp$ axioms from its axioms.

\begin{theorem}[\cite{Krajicek95,Pitassi93}]
    \(
        S_{\text{depth-$d$ Frege}}(\ofphp_n) = \exp \parens*{n^{\exp(-\BigO(d))}}
    \)
\end{theorem}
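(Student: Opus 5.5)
The statement is the Kraj\'i\v{c}ek--Pudl\'ak--Woods / Pitassi--Beame--Impagliazzo lower bound for the onto functional pigeonhole principle in depth-$d$ Frege. In the paper it is cited, not reproved. The plan below is how I would re-derive it, following the standard switching-lemma route.

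\textbf{Setup.} Fix a depth-$d$ Frege refutation $P$ of $\ofphp_n$ of size $S$, and aim for a contradiction whenever $S < \exp(n^{\epsilon_d})$ with $\epsilon_d = \exp(-\BigO(d))$. The random restrictions are partial matchings $\rho$ between pigeons and holes. They are chosen uniformly among matchings that leave exactly $m$ pigeons and $m-1$ holes unmatched. Under any such $\rho$, the restricted $\ofphp_n$ becomes (a copy of) $\ofphp_{m-1}$, so the class of formulas is closed under restriction.

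\textbf{Key tool: a matching decision tree switching lemma.} Call a formula $m$-evaluated by a matching decision tree of height $h$ if the tree queries a pigeon or a hole, branches on its partner in the matching, and leaves label the formula's truth value. The lemma to prove is: if every bottom-level conjunction in a DNF is covered by trees of height at most $r$, then after a random matching restriction with $m \approx n^{\delta}$ the whole DNF is represented by a matching decision tree of height at most $s$, except with probability $\exp(-\Omega(s))$, for suitable $r,s$. The proof is a Razborov-style encoding argument. One maps a bad restriction together with a short advice string injectively to a restriction with fewer free pigeons, and then counts matchings to bound the probability. Applying the lemma $d$ times, with $m$ shrinking as $n \mapsto n^{\delta}$ at each level, and taking a union bound over the $S$ subformulas, gives a single restriction $\rho$ under which every subformula of $P$ has a $k$-evaluation. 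Here $k$ is small compared to the remaining number of pigeons $m \ge n^{\exp(-\BigO(d))}$.

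\textbf{Contradiction.} Next I would show by induction along the proof that the evaluation of every line of $P\restriction\rho$ is the constant-true tree, i.e., every leaf is labelled $1$. Input axioms of the restricted $\ofphp$ evaluate to true. This uses a key property: every branch of a small-height matching tree is a partial matching that extends to a matching covering any small set of extra pigeons and holes, since $k \ll m$. Each Frege rule is sound with respect to these evaluations. To check this, merge the trees of the premises into one tree of height at most $\BigO(k)$. Along each branch, the conclusion is the true leaf, again by extendability of short partial matchings. But the empty clause (false) evaluates to the constant-false tree. This contradicts the induction, so $S \ge \exp(n^{\exp(-\BigO(d))})$.

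\textbf{Main obstacle.} The hardest step is the switching lemma itself. The structure is non-product, so restrictions are matchings rather than independent variable settings. The encoding must therefore handle the dependencies created by the pigeon and hole constraints, and the counting of matchings with $m$ versus $m-\ell$ free pigeons must give a factor of roughly $(\text{poly}(r)\cdot m/n)^{\ell}$. Getting this calculation right is what forces the parameter loss $n\mapsto n^{\delta}$ per level, and hence the doubly-exponential dependence $\exp(-\BigO(d))$ in the exponent.
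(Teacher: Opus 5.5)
The paper gives no proof of this statement and simply imports it from \cite{Krajicek95,Pitassi93}. Your outline is the standard argument of those references, so it is fine as a sketch: matching restrictions that keep the formula an $\ofphp$ instance, an iterated matching-decision-tree switching lemma with a union bound over the $S$ subformulas, and the induction showing that every line is evaluated to an all-$1$ tree while the final false line is the $0$-tree. One small correction: each added matched edge consumes both a free pigeon and a free hole, so the ratio of restriction counts is about $(\mathrm{poly}(r)\cdot m^2/n)^{\ell}$ rather than $(\mathrm{poly}(r)\cdot m/n)^{\ell}$; this only means you need $m \ll \sqrt{n}$ at each level, which is absorbed into $\delta$ and leaves the $n^{\exp(-\BigO(d))}$ exponent unchanged.
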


\begin{theorem}\label{thrm:bphp_upper}
    \(
        S_{\SRI[2 \log n]}(\bphp_n) = \BigO(n^2)
    \)
\end{theorem}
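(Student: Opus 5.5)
We argue by recursion on $n$, peeling off one bit column at a time. The only symmetries we use are transpositions of two rows of $(x_{i,j})$. Such a transposition moves exactly $2\log n$ variables, is positive, and maps $\bphp_n$ onto itself, so it is a legal global $(2\log n)$-symmetry. In $\SRI[2\log n]$, every clause derived so far is a consequence of the whole input formula, so it can be permuted by any row transposition in a single step. We also allow weakening freely, by Remark~\ref{rem:weakening}.

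\textbf{Recursive step.} Set $m = n/2 + 1$. Write $X_0 = \bigvee_{i \le m} x_{i,1}$ and $X_1 = \bigvee_{i\le m} \lnot x_{i,1}$. Take the clauses of $\bphp_n$ for pigeons $i<j\le m$ and holes $y < n/2$. Each is exactly a clause of $\bphp_{n/2}$ on pigeons $1,\dots,m$ and bit columns $2,\dots,\log n$, disjoined with $x_{i,1}\lor x_{j,1}$, and hence it is a subclause of that $\bphp_{n/2}$ clause weakened by $X_0$. We run the recursively constructed refutation of $\bphp_{n/2}$ with every clause weakened by $X_0$, and obtain a derivation of $D_0 := X_0$. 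All row transpositions used inside this subproof swap rows among $\{1,\dots,m\}$, so they map $X_0$ to itself. Therefore each symmetry step of the subproof remains a valid step on the weakened clauses.

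The same argument with holes $y \ge n/2$ gives $D_1 := X_1$. At deeper levels of the recursion the weakening clause collects literals from all previously fixed columns, but always over the current row set. This row set only shrinks, so the invariance argument persists at every level. Keeping this bookkeeping consistent, namely that the accumulated weakening is invariant under exactly the transpositions used at the lower level, is the point I expect to require the most care.

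\textbf{Combining step (dynamic programming).} For $a,b\ge 0$ with $a+b\le n+1$, let
\[F(a,b) = \bigvee_{i\le a} x_{i,1} \lor \bigvee_{a<i\le a+b} \lnot x_{i,1},\]
which says that it is not the case that the first $a$ pigeons have first bit $0$ and the next $b$ pigeons have first bit $1$. The clauses $F(m,b)$ are weakenings of $D_0$ and $F(a,m)$ are images of $D_1$ under row permutations, obtained by a single symmetry step each. Now let $a,b < m$; then $a+b+1\le n+1$. Apply the transposition of rows $a+1$ and $a+b+1$ to $F(a+1,b)$. The result is $F(a,b)\lor x_{a+b+1,1}$. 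Resolving it with $F(a,b+1) = F(a,b)\lor\lnot x_{a+b+1,1}$ gives $F(a,b)$. Filling the table in decreasing order of $a+b$ yields $F(0,0)=\bot$ in $\BigO(n^2)$ steps. The justification used in each symmetry step is the whole input formula, as the global system requires.

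\textbf{Counting.} Let $T(n)$ be the number of steps. Then $T(n) = 2T(n/2) + \BigO(n^2)$ and $T(2)=\BigO(1)$, so $T(n)=\BigO(n^2)$. The point is that the symmetry rule lets us avoid the $\binom{n+1}{m}$ subset clauses that a plain Resolution counting argument would need.
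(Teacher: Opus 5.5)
Your proof is correct and follows essentially the same route as the paper: two recursive refutations of $\bphp_{n/2}$, weakened by the first-column literals, yield $\bigvee_{i\le m} x_{i,1}$ and $\bigvee_{i\le m}\lnot x_{i,1}$, and a dynamic program over sorted first-column clauses using row transpositions then gives $T(n)=2T(n/2)+\BigO(n^2)$. One small fix is needed: for $a\ge 2$, a row permutation sending $\{1,\dots,m\}$ onto $\{a+1,\dots,a+m\}$ is not a $(2\log n)$-symmetry, so derive the base clauses $F(a,m)$ by chaining instead, obtaining $\bigvee_{a<i\le a+m}\lnot x_{i,1}$ from $\bigvee_{a-1<i\le a+m-1}\lnot x_{i,1}$ by the transposition of rows $a$ and $a+m$, which keeps the cost at one symmetry step each and in any case within the $\BigO(n^2)$ bound.
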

\begin{proof}
    We only use symmetries to swap rows of the matrix $(x_{i, j})$ in the formula.
    It is clear that each such symmetry step is a valid step in our proof system because the formula itself is symmetrical on swapping rows.

    \begin{figure}
    \centering
    \begin{tabular}{|c|c|c|c|}
        \hline
        \cellcolor{LimeGreen}$x_{1, 1}$ & \cellcolor{cyan}$x_{1, 2}$ & \cellcolor{cyan}$x_{1, 3}$ \\
        \hline
        \cellcolor{LimeGreen}$x_{2, 1}$ & \cellcolor{cyan}$x_{2, 2}$ & \cellcolor{cyan}$x_{2, 3}$ \\
        \hline
        \cellcolor{LimeGreen}$x_{3, 1}$ & \cellcolor{cyan}$x_{3, 2}$ & \cellcolor{cyan}$x_{3, 3}$ \\
        \hline
        \cellcolor{LimeGreen}$x_{4, 1}$ & \cellcolor{cyan}$x_{4, 2}$ & \cellcolor{cyan}$x_{4, 3}$ \\
        \hline
        \cellcolor{LimeGreen}$x_{5, 1}$ & \cellcolor{cyan}$x_{5, 2}$ & \cellcolor{cyan}$x_{5, 3}$ \\
        \hline
        \cellcolor{LimeGreen}$x_{6, 1}$ & $x_{6, 2}$                 & $x_{6, 3}$                 \\
        \hline
        \cellcolor{LimeGreen}$x_{7, 1}$ & $x_{7, 2}$                 & $x_{7, 3}$                 \\
        \hline
        \cellcolor{LimeGreen}$x_{8, 1}$ & $x_{8, 2}$                 & $x_{8, 3}$                 \\
        \hline
        \cellcolor{LimeGreen}$x_{9, 1}$ & $x_{9, 2}$                 & $x_{9, 3}$                 \\
        \hline
    \end{tabular}
    \caption{Variables of $\bphp_8$.}\label{fig:bphp8}
\end{figure}

    Our plan is to infer two clauses stating that the first column of the matrix $(x_{i, j})$ (green part in~Figure~\ref{fig:bphp8}) cannot start with $n / 2 + 1$ zeros nor $n / 2 + 1$ ones and then use symmetries to obtain a contradiction from them.
    The last step is possible because every assignment to the first column contains either at least $n / 2 + 1$ zeros or at least $n / 2 + 1$ ones, and after swapping some rows we can put these values to the first $n / 2 + 1$ rows.
    The construction is recursive.

    First of all, we want to infer clauses $(x_{1, 1} \lor x_{2, 1} \lor \cdots \lor x_{n / 2 + 1, 1})$ and $(\lnot x_{1, 1} \lor \lnot x_{2, 1} \lor \cdots \lor \lnot x_{n / 2 + 1, 1})$.
    In order to do that, we recursively refute the similar $\bphp_{n / 2}$ subformula on variables $x_{1, 2}, x_{1, 3}, \ldots, x_{1, \log n}, x_{2, 2}, \ldots, x_{n / 2 + 1, \log n}$ (cyan part in~Figure~\ref{fig:bphp8}).
    This operation should be done twice, once per each desired clause.
    Then, we construct a contradiction in the first column (green part in Figure~\ref{fig:bphp8}) using these two clauses.
    Consider all partial assignments on the first column.
    We call a partial assignment \textit{sorted} if all variables assigned to zero precede all variables assigned to one.
    Similarly, a clause is \textit{sorted} if and only if it encodes a negation of a sorted assignment.

    It is easy to see that all partial assignments of the size $n + 1$ contradict one of two derived clauses due to symmetry.
    By Remark~\ref{rem:weakening}, we can therefore use the weakening rule to explicitly infer all sorted clauses of size $n + 1$.
    This results in $O(n)$ clauses, each obtained using $O(n)$ symmetry steps, yielding a total size of $O(n^2)$ for this part of the proof.

    Next, each sorted clause of size $k \leq n$ can be derived by a single resolution step applied to two (not necessarily sorted) clauses of size $k+1$, and every unsorted assignment is symmetric to a sorted one. More formally, each sorted clause is derived via the following resolution step (we omit column indices for clarity, since all variables lie in the same first column):
    \[
        \begin{prooftree}
            \hypo{\lnot x_1 \lor \cdots \lor \lnot x_k \lor x_{k + 1} \lor \cdots \lor x_{t + 1}}
            \hypo{\lnot x_1 \lor \cdots \lor \lnot x_k \lor x_{k + 1} \lor \cdots \lor x_t \lor \lnot x_{t + 1}}
            \infer2{\lnot x_1 \lor \cdots \lor \lnot x_k \lor x_{k + 1} \lor \cdots \lor x_t}
        \end{prooftree}.
    \]
    Each unsorted clause in this part of the proof is obtained by applying a symmetry step to the proof of $\lnot x_1 \lor \lnot x_2 \lor \cdots \lor \lnot x_k \lor x_{k + 1} \lor \cdots \lor x_{t} \lor x_{t + 1}$.
    Namely, we apply the symmetry $\phi_{k + 1,t + 1}$ that swaps rows $k + 1$ and $t + 1$ in the matrix $(x_{i,j})$:
    \[
        \begin{prooftree}
            \hypo{\bphp_{n}}
            \infer1[$\phi_{k + 1, t + 1}$]{\lnot x_1 \lor \lnot x_2 \lor \cdots \lor \lnot x_{k + 1} \lor x_{k + 2} \lor \cdots \lor x_t \lor \lnot x_{t + 1}}
        \end{prooftree}.
    \]
    Proceeding in decreasing order of size, we refute all sorted assignments, eventually deriving the empty sorted assignment, which yields a contradiction.

    There are $O(n^2)$ sorted assignments, so the overall size of the proof is $S_{\SRI[2 \log n]}(\bphp_n) = 2S_{\SRI[2 \log n]}\parens*{\bphp_{n / 2}} + O(n^2)$, and hence $S_{\SRI[2 \log n]}(\bphp_n) = O(n^2)$.
\end{proof}

Our next step is, having an upper bound for $\bphp$, construct a formula that separates proof systems with different sizes of allowed symmetries.

\begin{definition}
    Let $n$ be a power of $2$.
    \emph{$\bphp_n^{(t)}$} is the formula obtained by adding extra variables $z_{i, j}$ and $w_{i, j}$ to all clauses of $\bphp_n$ in the following way:
    \[
        (\overline{x_{i, 1} \ldots x_{i, \log n}} \neq y) \lor (\overline{x_{j, 1} \ldots x_{j, \log n}} \neq y) \lor \parens*{\bigvee_{k < t} z_{i, k} \lor z_{j, k}} \lor \parens*{\bigvee_{k < 2 (t + \log n)} w_{y, k}},
    \]
    and also by adding extra clauses that are unit clauses falsifying all the extra variables.
    We call the non-unit clauses main clauses.
\end{definition}

\begin{figure}
    \centering

    \begin{tikzpicture}[auto]
        \node[draw] (r10) [draw = none]                {$\SRI[(2k - 1)]$};
        \node[draw] (r11) [draw = none, above = of r10] {$\SRI[2k]$};

        \node[draw] (rc20) [draw = none, right = 4cm of r10] {$\SRCII[(2k - 1)]$};
        \node[draw] (rc21) [draw = none, above = of rc20] {$\SRCII[2k]$};

        \path[->]
        (rc20) edge [thick] (r10)
        (rc21) edge [thick] (r11)
        ;

        \path[|-]
        (r11) edge [thick] (rc20)
        ;

        \path[|->]
        (r11) edge [thick] (r10)
        (rc21) edge [thick] (rc20)
        ;
    \end{tikzpicture}

    \caption{Relations between proof systems for the case when $k$ is at least logarithmic in number of variables, obtained from Corollary~\ref{cor:bphp_hierarchy}. Arrows with heads indicate p-simulation and arrows with tails indicate exponential separation.}

    \label{fig:diagram_bphp}
\end{figure}

\begin{theorem}\label{thrm:bphp_hierarchy_upper}
    Let $K = 2 (t + \log n)$.
    Then,
    \(
        S_{\SRI[K]}(\bphp_n^{(t)}) = \BigO(n^3 (t + \log n)).
    \)
\end{theorem}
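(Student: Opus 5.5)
The plan is to transfer the refutation from Theorem~\ref{thrm:bphp_upper} to $\bphp_n^{(t)}$. First, resolve every main clause against the unit clauses $\lnot z_{i,k}$ and $\lnot w_{y,k}$; this strips the padding variables and recovers every clause of $\bphp_n$. Then run the $\bphp_n$ refutation on these recovered clauses.

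The obstacle is that the symmetry steps in that refutation are global row swaps, and a row swap of the $x$-matrix alone is no longer a symmetry of $\bphp_n^{(t)}$. Clauses of $\bphp_n^{(t)}$ carry the $z$-variables of their two pigeons, so swapping rows $a$ and $b$ must also swap $z_{a,k}$ with $z_{b,k}$ for all $k<t$. That gives a positive symmetry moving $2\log n + 2t \le K$ variables. This is a valid $K$-symmetry of the whole formula, since it permutes the main clauses among themselves and the unit clauses among themselves. The $w$ variables are indexed by holes and are untouched by row swaps. So each symmetry step of Theorem~\ref{thrm:bphp_upper} stays a legal $\SRI[K]$ step.

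However, the symmetry rule in $\SRI[K]$ only lets us copy inferences whose premises are the full input formula, i.e.\ we may copy a derivation of a clause $A$ from $\bphp_n^{(t)}$ and get $\phi(A)$. I will check that each used clause in the $\bphp_n$ proof (the two column clauses and the sorted clauses) is derived from the input, with the $z,w$ eliminations included in that derivation. Because $\phi$ maps the units $\lnot z$ to units $\lnot z$, the image derivation is again valid. The recursion of Theorem~\ref{thrm:bphp_upper} on the subformula over $x_{i,2},\dots$ for rows $1,\dots,n/2+1$ uses row swaps among those rows, which are still these extended swaps, of size at most $K$.

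The main cost, and the step I expect to be the real obstacle, is that we cannot assume the stripped $\bphp_n$ clauses come for free. Producing them needs $\BigO(t+\log n)$ resolution steps per main clause. There are $\BigO(n^3)$ main clauses (pairs $i<j$ times holes $y$), giving the claimed $\BigO(n^3(t+\log n))$. The remaining refutation adds only $\BigO(n^2)$ by Theorem~\ref{thrm:bphp_upper}. The point to verify is that the recursive subproblems reuse the stripped clauses already derived rather than re-deriving them. The recursive instance's clauses are those stripped clauses, after resolving away the first-column variable via weakening (Remark~\ref{rem:weakening}). So the stripping cost is paid once and dominates, giving the bound.
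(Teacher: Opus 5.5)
Your proposal is correct and matches the paper's proof: propagate the unit clauses to strip the padding in $\BigO(n^3(t+\log n))$ steps, observe that a row swap extended to the $z$-variables (with the $w$-variables fixed) is a positive global $K$-symmetry, and rerun the refutation of Theorem~\ref{thrm:bphp_upper}. Your additional checks, namely the validity of the global-rule justification and the reuse of the stripped clauses inside the recursion, are details the paper leaves implicit.
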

\begin{proof}
    A symmetry swapping two rows $a$ and $b$ is a $K$-symmetry, since it swaps $x_{a, k}$ with $x_{b, k}$ and $z_{a, k}$ with $z_{b, k}$ and no other variables.
    Thus, after propagating all unit-clauses (it takes $\BigO(n^3 (t + \log n))$ steps), we can use the proof constructed in Theorem~\ref{thrm:bphp_upper}.
\end{proof}

For proving the lower bounds for systems with small symmetries, we need the following technical statement.

\begin{theorem}\label{thrm:id_clauses}
    Let $\Pi$ be one of proof systems $\SRI[k]$, $\SRII[k]$, $\SRCI[k]$, or $\SRCII[k]$ for some parameter $k$ (i.e. a proof system with static symmetries).
    Consider a formula $\Phi$ over two sets of variables $X$ and $Y$.
    Suppose that every symmetry of $\Phi$ available in $\Pi$ maps each clause containing variables from $X$ to itself.
    Then, for every substitution $\rho$ to $Y$, $S_\Pi(\Phi) \geq S_\Res(\Phi[\rho])$.
\end{theorem}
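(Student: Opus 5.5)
The plan is to take a shortest $\Pi$-refutation $\xi$ of $\Phi$ and turn it, clause by clause, into a Resolution refutation (with weakening) of $\Phi[\rho]$ of at most the same number of steps. We then use that weakening can be eliminated in Resolution without increasing the size. The key invariant is that every clause $C$ of $\xi$ is sent to $C[\rho]$. Here $C[\rho]$ is either the constant $1$ (when $\rho$ satisfies a literal of $C$), or the clause obtained by deleting the falsified $Y$-literals. For each clause of $\xi$ that restricts to a non-trivial clause we produce a clause of the new proof that is a subclause of $C[\rho]$. This is the standard restriction argument for Resolution: input clauses map to input clauses of $\Phi[\rho]$ (or to $1$), and a resolution step restricts either to a resolution step, to a copy of one premise (when the pivot is in $Y$), or to nothing, since a satisfied conclusion needs no step. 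Hence resolution steps cost at most one step each.

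The new ingredient is the symmetry rule. Since $\Pi$ has static symmetries, a symmetry step derives $\phi(A)$, where $A$ was derived from input clauses $B_1,\dots,B_l$ and each $\phi(B_i)$ is again an input clause. This holds both in the local case and in the global case (where $\set{B_i}$ is the whole formula). I will show that $\phi(A)[\rho]$ can be obtained without any extra derivation, as follows. Consider the derivation of $A$ from the $B_i$ and apply $\phi$ to it. This gives a derivation of $\phi(A)$ from the $\phi(B_i)$, each of which is an input clause of $\Phi$. The hypothesis says that every clause containing an $X$-variable is fixed by $\phi$. So each $B_i$ with an $X$-variable equals $\phi(B_i)$, and its restriction under $\rho$ is unchanged. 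Input clauses purely over $Y$ restrict to either $1$ or the empty clause. The point is that $\rho$ is a total assignment to $Y$, so every $\phi(B_i)[\rho]$ is an input clause of $\Phi[\rho]$ or is $1$.

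The hard part is turning this into a cost bound rather than re-expanding the whole derivation of $A$. The approach I would try first is to argue that $\phi(A)[\rho]$ is in fact $A'[\rho]$ for a clause $A'$ already handled, namely $A'=A$ itself. For this I would exploit the induced action of $\phi$ on the restricted world. Consider the mapping that sends $C[\rho]$ to $\phi(C)[\rho]$ on input clauses. By the hypothesis it is the identity on all restricted input clauses containing $X$, while clauses over $Y$ only are trivial. Restricting the justification step by step, one checks by induction along the derivation of $A$ that $\phi(D)[\rho]$ and $D[\rho]$ agree for every intermediate clause $D$. Indeed, $\phi$ permutes $X$-literals only in ways that fix all $X$-containing input clauses, so every resolution on an $X$-pivot uses literals that both premises share with their images. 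If this identification goes through, the symmetry step costs at most a weakening, and the total size is at most $|\xi|$.

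I expect the main obstacle to be exactly this identification. The map $\phi$ may still move $X$-variables, for example when it fixes a clause setwise while permuting literals inside it, and it may mix $X$ and $Y$. If the pure induction fails, the fallback is to take as the image of $\phi(A)$ the clause produced for $A$. We must then verify, using that each $X$-containing clause is fixed, that the clause for $A$ is a subclause of $\phi(A)[\rho]$ up to weakening. This suffices for soundness of the simulation and keeps the step count at most $S_\Pi(\Phi)$.
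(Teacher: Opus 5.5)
Your simulation breaks at the symmetry rule. The identification $\phi(D)[\rho]=D[\rho]$ is false in general, and your fallback fails in exactly the case that matters.

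The hypothesis only says that $\phi$ fixes each $X$-containing input clause \emph{as a set}. So $\phi$ may still permute $Y$-variables, and $\rho$ need not be $\phi$-invariant; all that survives along the justification is the $X$-part, $\phi(D)\rvert_X=D\rvert_X$. Here is a concrete failure. Let $X=\{x\}$ and $\Phi=\{x\lor y_1\lor y_2,\ \lnot x\lor y_1\lor y_2,\ \lnot y_1\lor y_3,\ \lnot y_1\lor\lnot y_3,\ \lnot y_2\lor y_4,\ \lnot y_2\lor\lnot y_4\}$. Its only nontrivial positive global symmetry is the $4$-symmetry $\phi$ swapping $y_1\leftrightarrow y_2$ and $y_3\leftrightarrow y_4$. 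This $\phi$ fixes both $X$-clauses, so the hypothesis holds for $\SRI[4]$. Resolving the first and third clauses gives $A=x\lor y_2\lor y_3$, and the symmetry rule gives $\phi(A)=x\lor y_1\lor y_4$. Take $\rho\colon y_1=y_2=y_4=0,\ y_3=1$, which even satisfies every $Y$-only clause. Then $A[\rho]=1$ but $\phi(A)[\rho]=(x)$. Your translation produced nothing for $A$, so there is no clause to reuse. Producing a subclause of $\phi(A)[\rho]$ line by line would mean re-deriving the $\phi$-image of the justification, which destroys the size bound.

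Two ingredients are missing.

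First, even the $X$-part invariant needs the paper's observation that the pivot of a (non-tautological) resolution step is the \emph{unique} variable occurring with opposite signs in the two premises. If $\phi$ preserves the $X$-parts of both premises, an $X$-pivot $p$ still clashes in the images, hence $\phi(p)=p$. A $Y$-pivot cannot be sent into $X$, since that would give the premises a second clashing variable. Your remark about literals the premises share with their images does not yield $\phi(p)=p$.

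Second, the size bound has to come from a global count, not a line-by-line restriction of $\xi$. Unwind every symmetry step into a (possibly exponentially long) Resolution refutation of $\Phi$. By the invariant, each of its clauses has the same $X$-part as some clause of $\xi$. For every unwound clause $E$ with $E[\rho]\neq 1$ we have $E[\rho]=E\rvert_X$, and $E\rvert_X$ is one of the following:
\begin{itemize}
\item an input clause of $\Phi[\rho]$;
\item the resolvent of the premises' $X$-parts, when the pivot is in $X$;
\item a weakening of the $X$-part of the premise whose pivot literal $\rho$ falsifies, when the pivot is in $Y$.
\end{itemize}
Listing these distinct $X$-parts in order of first occurrence gives a Resolution-with-weakening refutation of $\Phi[\rho]$ with at most $\lvert\xi\rvert$ lines. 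Eliminating weakening then finishes the proof. This is how the paper's ``unwinding introduces no new $X$-parts'' argument should be read.
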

\begin{proof}
    Let $C\rvert_X$ denote the set of literals over variables from $X$ in a clause $C$.    
    The key idea is to show that in every application of the symmetry rule with a symmetry $\phi$, each clause $C$ in the justification satisfy the property $C\rvert_X = \phi(C)\rvert_X$.

    We prove this by induction on the structure of the $\Pi$-proof.
    It suffices to verify that this property is preserved by the resolution rule.
    Consider a resolution step $\begin{prooftree} \hypo{A} \hypo{B} \infer2{C} \end{prooftree}$, and assume that $A\rvert_X = \phi(A)\rvert_X$ and $B\rvert_X = \phi(B)\rvert_X$.
    We show that $C\rvert_X = \phi(C)\rvert_X$.
    $C\rvert_X$ and $\phi(C)\rvert_X$ may differ only in the literals associated with the pivot variable.
    This variable could be uniquely determined as the only variable appearing in the premises with different signs.
    Therefore, if it lies in $X$, it is mapped to itself by $\phi$.

    This shows that unwinding the symmetries does not introduce clauses with different sets of literals over $X$ than those that were already present in the original $\Pi$-proof.
    Thus, after doing that and applying substitution $\rho$, we obtain a Resolution-proof of formula $\Phi[\rho]$ of smaller size.
\end{proof}

\begin{theorem}\label{thrm:bphp_hierarchy_lower}
    Let $K = 2 (t + \log n)$.
    Then,
    \(
        S_{\SRCII[(K - 1)]}(\bphp_n^{(t)}) \geq S_{\Res}(\bphp_n).
    \)
\end{theorem}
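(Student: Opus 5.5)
The plan is to apply Theorem~\ref{thrm:id_clauses} with $\Pi = \SRCII[(K-1)]$, taking $X$ to be the original variables $x_{i,j}$ and $Y$ the extra variables $z_{i,k}$ and $w_{y,k}$. For $\rho$ I will take the substitution setting every variable in $Y$ to $0$. This $\rho$ satisfies all the extra unit clauses and deletes all $z$- and $w$-literals from the main clauses. Hence $\bphp_n^{(t)}[\rho] = \bphp_n$, and the theorem gives $S_{\SRCII[(K-1)]}(\bphp_n^{(t)}) \geq S_\Res(\bphp_n)$.

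It remains to check the hypothesis of Theorem~\ref{thrm:id_clauses}. The clauses containing variables from $X$ are exactly the main clauses. In $\SRCII[(K-1)]$ a symmetry $\phi$ is used only when every justification clause $B$ and its image $\phi(B)$ are input clauses. So I must show: if $C$ is a main clause, $\phi$ is a (possibly complementary) $(K-1)$-symmetry, and $\phi(C)$ is an input clause, then $\phi(C) = C$. This property, applied to the justification clauses, is precisely what the induction in the proof of Theorem~\ref{thrm:id_clauses} uses.

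The key structural observation is that $\phi$ acts as a permutation on variables, possibly flipping signs. If a variable $v$ occurs in $\phi(C)$ but not in $C$, then $v = \phi(u)$ for some $u \neq v$. By injectivity, $\phi(v) \neq v$, so $v$ is moved. Symmetrically, every variable of $C$ that does not occur in $\phi(C)$ is moved. Sign changes play no role in this count, so complementary symmetries cause no extra difficulty.

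Every main clause has exactly $2\log n + 2t + K = 2K$ literals. Every other input clause is a unit clause, so $\phi(C)$ must be a main clause $C'$, with parameters $(i',j',y')$ where $C$ has $(i,j,y)$. I then split into two cases.
\begin{itemize}
\item If $y' \neq y$, then the $K$ variables $w_{y,0},\ldots,w_{y,K-1}$ occur in $C$ but not in $C'$. All of them are moved, which contradicts the fact that $\phi$ moves at most $K-1$ variables.
\item If $y' = y$ but $\{i,j\} \neq \{i',j'\}$, pick $i \notin \{i',j'\}$ and $i' \notin \{i,j\}$. The $t + \log n$ variables $x_{i,\cdot}, z_{i,\cdot}$ lie in $C$ but not in $C'$. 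The $t+\log n$ variables $x_{i',\cdot}, z_{i',\cdot}$ lie in $C'$ but not in $C$. Together this gives $2(t+\log n) = K$ moved variables, again a contradiction.
\end{itemize}
Hence $C' = C$, and Theorem~\ref{thrm:id_clauses} applies.

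The main obstacle I expect is the second case. Counting only the variables that vanish from $C$ gives $t+\log n$, which is just half of $K$. The argument needs the injectivity observation so that the variables newly appearing in $C'$ are also counted as moved. This doubling is exactly why the parameter $K = 2(t+\log n)$ is the tight threshold, matching the upper bound of Theorem~\ref{thrm:bphp_hierarchy_upper}.
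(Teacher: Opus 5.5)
Your proof is correct and takes essentially the paper's route: zero out the extra variables and apply Theorem~\ref{thrm:id_clauses}, after showing that the $w$-variables forbid changing $y$ and that changing the pigeon pair moves at least $K$ variables. The paper states this as ``the smallest nontrivial row permutation is a swap, which is a $K$-symmetry''; your clause-by-clause count, which uses injectivity to count both the vanishing and the newly appearing variables, is a more careful version of the same argument and applies directly to local symmetries.
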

\begin{proof}
    Variables $w_{i, j}$ ensure that there are no symmetries that map main clauses corresponding to different values of $y$ to each other.
    Therefore, the only symmetries of the main clauses are permutations of rows.
    The smallest nontrivial permutation of rows, which is a swap of two rows, is a $K$-symmetry.
    Hence, there are no nontrivial symmetries of size $K - 1$ on main clauses.
    Applying Theorem~\ref{thrm:id_clauses} concludes the proof.
\end{proof}

Combining Theorems~\ref{thrm:bphp_hierarchy_upper} and~\ref{thrm:bphp_hierarchy_lower} together, we obtain a separation for systems with at least logarithmic symmetries.

\begin{corollary}\label{cor:bphp_hierarchy}
    Let $k = o(N)$ and $k \geq \log N$, where $N$ is the number of variables. Then, $\SRI[2 k]$ is exponentially separated from $\SRCII[(2k - 1)]$.
\end{corollary}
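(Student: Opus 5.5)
Given $n$ (a power of $2$) and $k$ with $k = o(N)$ and $k \geq \log N$, we choose $t$ with $K = 2(t + \log n) = 2k$, i.e.\ $t = k - \log n$, and take the family $\Phi_n = \bphp_n^{(t)}$. We need $t \geq 0$ (indeed $t \ge 1$, so the $z$-variables exist and the row swaps really have size exactly $2k$). We also need to relate $N$, the number of variables of $\bphp_n^{(t)}$, to $n$ and $t$. The variables are the $(n+1)\log n$ variables $x_{i,j}$, the $(n+1)t$ variables $z_{i,j}$, and the $n \cdot 2(t+\log n)$ variables $w_{y,j}$. Hence $N = \Theta(n(t + \log n)) = \Theta(nk)$.

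The first step is to check that the parameters are consistent. Since $k = o(N)$ and $N = \Theta(nk)$, the relation between $k$ and $n$ is free. The only real constraint is $k \geq \log N \ge \log n$ up to lower-order terms, which guarantees $t = k - \log n \geq 0$. In the boundary case $k$ close to $\log N$ we have $\log N = \log n + \log k + O(1)$, so $t \ge \log k - O(1) > 0$ for large $n$. I would also fix the growth of $k$ as a function of $n$ (for instance $k$ polynomial in $n$ at most, since larger $k$ would make $N$ superpolynomial in $n$). Then $\log N = \Theta(\log n)$ and $\poly(N) = \poly(n)$ as long as $k \le \poly(n)$. If $k$ is larger than any polynomial in $n$, I would instead pad $n$ upward: we are free to choose $n$ for the given $k$. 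So I parametrize the family by $n$ and pick $k = k(N)$ accordingly, choosing $n$ as large as allowed by $k$. This ensures $k \le \poly(n)$, at worst up to rounding $n$ to a power of two and adjusting $t$ by a constant.

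The upper bound comes from Theorem~\ref{thrm:bphp_hierarchy_upper}: $S_{\SRI[2k]}(\Phi_n) = \BigO(n^3 k)$, which is polynomial in $N$.

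For the lower bound, Theorem~\ref{thrm:bphp_hierarchy_lower} gives $S_{\SRCII[(2k-1)]}(\Phi_n) \geq S_{\Res}(\bphp_n)$. We then invoke the known exponential Resolution lower bound for the binary pigeonhole principle, $S_{\Res}(\bphp_n) = 2^{\Omega(n)}$. This follows, for example, from the bounded-depth Frege bound quoted above via the easy reduction from $\ofphp_n$, which gives $\exp(n^{\Omega(1)})$ for Resolution as depth-$1$. The sharper $2^{\Omega(n)}$ follows from the known Resolution lower bounds for $\bphp$.

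The main obstacle is the bookkeeping in the definition of exponential separation, which asks for $2^{\Omega(N)}$ in the number of variables, whereas we only get $2^{\Omega(n)}$ with $N = \Theta(nk)$. I would handle this by reading ``exponential'' as $2^{N^{\Omega(1)}}$, which holds when $k \le \poly(n)$. Equivalently, I would measure the size parameter of the family as $n$, the parameter indexing $\Phi_n$, as in the definition, where $n$ indexes the family. With $k$ bounded polynomially in $n$ by the choice above, the upper bound is polynomial and the lower bound is exponential in the family parameter.
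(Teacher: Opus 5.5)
Your core argument is the paper's: its whole proof combines Theorems~\ref{thrm:bphp_hierarchy_upper} and~\ref{thrm:bphp_hierarchy_lower} for $\bphp_n^{(t)}$ with $t = k - \log n$, together with Resolution hardness of $\bphp_n$. For the $2^{\Omega(n)}$ Resolution bound you can be concrete. Substituting $\bigvee_{h:\,h_j=1} p_{i,h}$ for $x_{i,j}$ and $\bigvee_{h:\,h_j=0} p_{i,h}$ for $\lnot x_{i,j}$ turns a Resolution refutation of $\bphp_n$ into one of functional $\mathrm{PHP}^{n+1}_n$ with polynomial overhead, so Haken's bound applies. Also, $t = 0$ is allowed: a row swap is then exactly a $2\log n = 2k$-symmetry, so you do not need $t \ge 1$.

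The step that fails is the parameter bookkeeping. Your own count gives exactly $N = (n+1)\log n + (n+1)t + 2nk = (3n+1)k$. So once $k$ is a function of $N$, the value $n = (N/k - 1)/3$ is forced. The relation between $k$ and $n$ is not free, and $k = o(N)$ is precisely the condition $n \to \infty$, which is how the paper uses it.

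Consequently, ``if $k$ is superpolynomial in $n$, pad $n$ upward'' cannot be done. Enlarging $n$ or adding dummy variables enlarges $N$ and hence $k(N)$, and $n$ stays $\Theta(N/k(N))$. For example, $k(N) = N/\log N$ satisfies both hypotheses but forces $n = \Theta(\log N)$. Then the $\SRCII[(2k-1)]$ lower bound $2^{\Omega(n)}$ is only polynomial in $N$. Meanwhile the $\SRI[2k]$ upper bound $\BigO(n^3 k)$ is $2^{\Theta(n)}$ in the index $n$. So the argument gives no exponential separation under either of your readings, and the reduction to $k \le \poly(n)$ is not without loss of generality.

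What the argument actually proves (yours, and the paper's, which glosses over this) is a $\poly(N)$ upper bound against a $2^{\Omega(N/k)}$ lower bound. That is an exponential separation for $\log N \le k \le N^{1-\epsilon}$ (equivalently $k \le \poly(n)$), and only a superpolynomial one when $N/k = \omega(\log N)$. You should state this restriction on $k$ explicitly rather than claim it can be arranged by padding.
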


The requirement $k = o(N)$ here emerges from the fact that we need the parameter $n$ of our formula to grow, and $N = \Theta(nk)$.

\section{Tseitin Formulas}\label{sec:tseitin-formulas}

In this section, we consider formulas based on Tseitin contradictions on two types of grids: rectangular and triangular.
For the sake of clarity, we demonstrate our construction on the rectangular grid first, obtaining simpler but more restricted separations (see Figure~\ref{fig:diagram_tseitin_rectangular}) and then generalize them to the triangular grid.

We start with the definition of a Tseitin contradiction.
We follow the notation in~\cite{Haastad20}.

\begin{definition}
    Let $G = \tuple{V, E}$ be a graph and $\alpha$ be a mapping from variables to $\mathbb{F}_2$ such that $\sum_{v \in V} \alpha(v) = 1$.
    A formula on the set of Boolean variables $\map{x_e}{e \in E}$ is called a \emph{Tseitin formula} on graph $G$ if, for all $v$, it encodes the following linear equality over $\mathbb{F}_2$:

    \[
        \sum_{e \ni v} x_e = \alpha(v) \pmod{2}.
    \]

    The encoding of each linear equality is done by adding $\deg(v)$-clauses excluding all $2^{\deg(v) - 1}$ assignments on $\map{x_e}{v \in e}$ that contradict the constraint.

    A Tseitin formula is a \emph{Tseitin contradiction} if $\abs{V}$ is odd and $\alpha(v) = 1$ for all $v$.
\end{definition}

\subsection{Rectangular Grid}\label{sec:rectangular}

\input{static/figs/border.tex}
    
Let $T_n$ be the Tseitin contradiction on a toroidal rectangular grid of the size $n \times n$.
This formula has known lower bounds for the bounded-depth Frege proof system~\cite{Haastad20} (see also~\cite{Haastad22}).

\begin{theorem}[\cite{Haastad20}]\label{thrm:haastad}
    For any sufficiently large odd $n$ and $d \leq \frac{\log n}{59 \log \log n}$, the following holds.
    Any depth-$d$ Frege refutation of $T_n$ requires size $\exp(\Omega(n^{\frac{1}{58(d + 1)}}))$.
\end{theorem}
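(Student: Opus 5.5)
We do not reprove this statement; it is imported from~\cite{Haastad20}. If we had to reconstruct it, we would follow the standard template for bounded-depth Frege lower bounds, namely random restrictions combined with a switching lemma, applied to $T_n$ on the toroidal $n \times n$ grid.

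\textbf{Restrictions.} We would design a family of random restrictions $\rho$ adapted to Tseitin formulas. Concretely, $\rho$ picks a sub-grid of the torus of side length $n' \approx n^{1-\epsilon}$. It then fixes all edge variables outside a set of disjoint paths realising the sub-grid. These values are set consistently with the parity constraints, so that $T_n\rvert_\rho$ is (up to renaming of variables) the Tseitin contradiction $T_{n'}$ on the smaller grid. The first step is to verify this closure property. The point is that the restricted formula stays in the same family, just at a smaller scale.

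\textbf{Depth reduction.} The core step is a switching lemma for these restrictions. For any depth-$2$ formula of small bottom fan-in, after applying $\rho$ it is, with high probability, represented by a shallow decision tree. Here the decision tree queries edge variables of the sub-grid and branches only on assignments that respect the local parity constraints. We would apply this lemma $d$ times, once per level of the Frege refutation. Each application shrinks the side length polynomially, $n \mapsto n^{\Theta(1)}$, and the constant $58$ in the exponent comes from this bookkeeping. By a union bound over all formulas in a refutation of size $\exp(o(n^{1/(58(d+1))}))$, some restriction turns every line into a decision tree of depth much smaller than the surviving grid size $m$. The bound $d \le \log n/(59\log\log n)$ guarantees that $m$ is still large after $d$ rounds.

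\textbf{Final contradiction.} We would then use a $t$-evaluation argument in the style of Beame--Impagliazzo--Pitassi. Every line of the restricted proof is assigned a shallow decision tree. Each axiom evaluates to ``true'' on all branches. Each Frege rule preserves this property, because any branch of depth $\ll m$ can be extended to an assignment that satisfies all parity constraints it touches; this uses the expansion of the grid, since a small set of edges never isolates a region containing an odd number of vertices. The empty clause, however, would have to evaluate to ``false'', which is a contradiction.

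\textbf{Main obstacle.} The hard part is the switching lemma itself. Unlike in the pigeonhole setting, the restriction distribution here has strong dependencies, since paths must connect and parities must match. The standard Razborov-style encoding argument therefore has to be adapted to grid geometry, and this adaptation is where Håstad's technical work lies. The first thing to try would be to encode each failure event by the small set of sub-grid cells touched by the long branch, using local re-randomisation within those cells.
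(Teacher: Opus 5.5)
You treat this exactly as the paper does: as a result imported from H{\aa}stad's work, with no proof given here. Your outline (grid-preserving random restrictions, a switching lemma applied once per level, and a $t$-evaluation contradiction) matches the overall shape of H{\aa}stad's argument, but one justification in it is wrong: a few edges \emph{can} cut off an odd region (the four edges at a single vertex do), so the extension step instead needs the branches to respect the parity constraints of every small region they enclose, with the odd total charge absorbed by the unique giant component that survives removal of the few queried edges.
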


\begin{theorem}\label{thrm:Tseitin3}
    \(
        S_{\SRCII[3]}(T_n) = \BigO(n^2)
    \)
\end{theorem}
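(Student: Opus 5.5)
The plan is to simulate the usual ``sweep'' argument for Tseitin contradictions, the folklore dynamic programming on a broken profile. In that argument we add the vertex constraints one by one and keep only the parity of the edges crossing the current frontier. The symmetry rule is used to avoid writing out the $2^{\Theta(n)}$ clauses that such a parity constraint needs in plain Resolution. The basic local symmetry is the following complementary $2$-symmetry. Take a vertex $v$ and two edges $e,f\ni v$, and let $\phi$ map $x_e\mapsto\lnot x_e$ and $x_f\mapsto\lnot x_f$. It maps the set of clauses of the constraint of $v$ onto itself, since the parity at $v$ is preserved. If the other endpoints of $e$ and $f$ lie outside a set $S$ of already processed vertices, then $\phi$ fixes every constraint of $S\setminus\{v\}$, so $\phi$ is a local symmetry of the subformula of $S$. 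Adding one more edge along the frontier, if needed to reach a chosen representative, gives a $3$-symmetry. This explains the bound $3$.

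\emph{Step 1: reduce the torus to a planar sweep.} Cut the torus along one row and one column and process vertices in row-major order. The frontier (profile) then consists of $\BigO(n)$ edges: the vertical edges below the current row, the wraparound horizontal edges, and $\BigO(1)$ extra ``broken'' edges at the current position. At every stage we want the clauses of the statement $P_S$: ``the parity of the frontier edges of $S$ equals $|S| \bmod 2$''. We never list all of them. We list only a canonical family of size $\BigO(1)$ or $\BigO(n)$, for example sorted assignments in the style of the proof of Theorem~\ref{thrm:bphp_upper}: the frontier assignments whose ones form a prefix.

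\emph{Step 2: the inductive step.} Add the next vertex $u$, which touches $\BigO(1)$ frontier edges. For each canonical clause of the new frontier, we resolve $u$'s constraint clauses with the needed old-frontier clauses on the pivots leaving the frontier. These old-frontier clauses are generally not canonical. Each one is obtained in one step by the symmetry rule, applied to a canonical clause with a symmetry that flips two frontier edges meeting at a frontier vertex, or that moves a ``one'' across the frontier. Such a symmetry only touches edges at a single vertex of $S$ whose other endpoints lie outside $S$, so it is a local symmetry, and it has size at most $3$. Each vertex therefore costs $\BigO(1)$ canonical clauses times $\BigO(1)$ symmetry and resolution steps. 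With $n^2$ vertices this gives a total of $\BigO(n^2)$.

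\emph{Step 3: finish.} Once all vertices are processed, the frontier is empty and the canonical clause for odd total parity is the empty clause.

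The main obstacle is Step 2. We must choose the canonical family so that every clause needed for the resolution at $u$ is at most one short local symmetry away from a canonical clause. The justification of that symmetry must consist only of input clauses fixed by the symmetry. This forces the flipped edges to share an endpoint inside the processed region while their other endpoints lie outside it. I expect the wraparound edges of the torus to be the delicate point, and I would first try to handle them by processing the last row and column with the same corner symmetries.
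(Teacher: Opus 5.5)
Your architecture is the right one: a broken-profile sweep that keeps a constant-size canonical family of frontier clauses and obtains each needed non-canonical clause by one local symmetry step. But Step~2, which you flag as the main obstacle, fails under the restriction you put on the symmetries. You only allow flips of frontier edges that share a single processed endpoint. Your ``one more edge'' cannot be added to such a flip, because the processed endpoint of that edge would then see an odd number of flipped edges. Flips of this kind never move a $1$ from one frontier edge to another.

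Concretely, when you process $u=(r,c+1)$, the next vertex $(r,c+2)$ will need frontier clauses with both values on its upward edge. For $r\ge 3$ the processed endpoint $(r-1,c+2)$ of that edge has no other frontier edge, so no admissible flip can create a $1$ there. Hence that edge must be free in the canonical family. Those clauses come by resolution from clauses of the previous frontier carrying the same value on that edge, so the edge must be free there too. Iterating, every upward edge to the right of the break must be free, and the family is not of constant size.

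The prefix-sorted family does not help either. It has $\Theta(n)$ members per frontier, which already turns your count into $O(n^3)$. Restoring sortedness also requires moving ones over long distances, which no small symmetry does in one step.

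The missing idea is that locality does \emph{not} force the flipped edges to share a processed endpoint. The paper flips along a U-shaped path: a frontier edge at $v$, the internal edge $vv'$ between two adjacent processed vertices, and a frontier edge at $v'$. Every processed vertex sees $0$ or $2$ flipped edges. So the clauses of processed vertices, the only input clauses in the justification, are permuted among themselves, while $x_{vv'}$ does not occur in the frontier clause. The net effect is to flip two neighbouring edges of the same straight piece of the border, e.g.\ to shift a $1$ by one position. This is the local $3$-symmetry.

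With it, the basic clauses are the assignments that vanish except on one designated edge per straight piece of the border and on the three edges at the break. That is a constant number, and every clause needed to chip off the next vertex is one such shift away from a basic clause. This is also why the processing order must keep the border made of $O(1)$ straight pieces.

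You also leave the torus open. The paper handles it by first resolving the clauses of the first row and first column. This leaves a four-piece border around a planar $(n-1)\times(n-1)$ grid, whose eight basic clauses are derived in $O(n)$ steps.
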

\begin{proof}
    Informally, we refute $T_n$ as follows.
    As a first step, we reduce our problem on the toroidal grid to the problem on a planar grid by resolving everything in one row and one column.
    Then, we solve the problem inductively chipping off one node at the time, row by row, left to right.
    The order of nodes is important as it is crucial that the border of the remaining part always consists of constant number of ``straight edges'' (see Figure~\ref{fig:border}), inside which we can exploit symmetries.
    
    More formally, we start by resolving clauses corresponding to the first row and the first column on the grid.
    The result of such inference will be a clause consisting of variables corresponding to two sets of vertical edges that we call the top and bottom borders and two sets of horizontal edges that we call left and right borders.
    We call such clauses \emph{border assignments} (see example in Figure~\ref{fig:border}).
    Clauses corresponding to vertices enclosed by borders are not used in the inference.
    Thus, we call those vertices \emph{unused}.
    There are exponential number of border assignments.
    However, we are interested only in those that set all edges to zero, except for one edge for each border (the leftmost edges for top and bottom borders, and the topmost edges for left and right borders).
    We call such border assignments \emph{basic clauses}.
    There are only eight basic clauses that could be inferred from the clauses of the first row and the first column, so we infer all of them.
    This can be done in linear number of steps.

    Flipping edges along a U-shaped path (i.e.,\ a path that forms a rectangle without one edge) that starts and ends on a border is a local 3-symmetry of a border assignment, because each used clause has even number of flipped edges.
    An example of such U-shaped path is shown in Figure~\ref{fig:border} by dashed red paint.
    Due to the structure of border assignments, eight basic clauses are sufficient to produce any other border assignment with the same parity using only local 3-symmetries.

    Our next goal is to use the topmost unused row to construct all basic clauses for a border enclosing the same vertices without this row.
    In order to achieve it, we consider broken border assignments, i.e.\ assignments to a set of variables that consist of three regular borders and the top border consisting of vertical edges of two neighboring rows and one horizontal edge as shown by colored edges in Figure~\ref{fig:tseitin-dp}.
    Similarly to regular border assignments, we define basic clauses for broken border assignments by setting all edges to zero, except for one edge per regular border and three edges closest to the horizontal edge on a broken border (colored red in Figure~\ref{fig:tseitin-dp}).
    Again, there are only constant number of basic clauses for each broken border.
    And similarly, we can use local 3-symmetries to infer any broken-border assignment from basic clauses.

    \begin{figure}
    \centering

    \begin{tikzpicture}[auto]
        \node[draw, circle] (00) [draw = none, ] {};
        \node[draw, circle, thick, cyan] (10) [draw = none, below of = 00] {};
        \node[draw, circle, thick, cyan] (20) [draw = none, below of = 10] {};
        \node[draw, circle] (30) [draw = none, below of = 20] {};
        \node[draw, circle] (40) [draw = none, below of = 30] {};
        \node[draw, circle] (01) [draw = none, right of = 00] {};
        \node[draw, circle, thick, cyan] (11) [below of = 01] {};
        \node[draw, circle, thick, cyan] (21) [below of = 11] {};
        \node[draw, circle] (31) [below of = 21] {};
        \node[draw, circle] (41) [draw = none, below of = 31] {};
        \node[draw, circle] (02) [draw = none, right of = 01] {};
        \node[draw, circle, thick, cyan] (12) [below of = 02] {};
        \node[draw, circle, thick, cyan] (22) [below of = 12] {};
        \node[draw, circle] (32) [below of = 22] {};
        \node[draw, circle] (42) [draw = none, below of = 32] {};
        \node[draw, circle] (03) [draw = none, right of = 02] {};
        \node[draw, circle, thick, cyan] (13) [below of = 03] {};
        \node[draw, circle, thick, cyan] (23) [below of = 13] {};
        \node[draw, circle] (33) [below of = 23] {};
        \node[draw, circle] (43) [draw = none, below of = 33] {};
        \node[draw, circle] (04) [draw = none, right of = 03] {};
        \node[draw, circle, thick, cyan] (14) [below of = 04] {};
        \node[draw, circle, thick, cyan] (24) [below of = 14] {};
        \node[draw, circle] (34) [below of = 24] {};
        \node[draw, circle] (44) [draw = none, below of = 34] {};
        \node[draw, circle] (05) [draw = none, right of = 04] {};
        \node[draw, circle, thick, cyan] (15) [below of = 05] {};
        \node[draw, circle] (25) [below of = 15] {};
        \node[draw, circle] (35) [below of = 25] {};
        \node[draw, circle] (45) [draw = none, below of = 35] {};
        \node[draw, circle] (06) [draw = none, right of = 05] {};
        \node[draw, circle, thick, cyan] (16) [below of = 06] {};
        \node[draw, circle] (26) [below of = 16] {};
        \node[draw, circle] (36) [below of = 26] {};
        \node[draw, circle] (46) [draw = none, below of = 36] {};
        \node[draw, circle] (07) [draw = none, right of = 06] {};
        \node[draw, circle, thick, cyan] (17) [below of = 07] {};
        \node[draw, circle] (27) [below of = 17] {};
        \node[draw, circle] (37) [below of = 27] {};
        \node[draw, circle] (47) [draw = none, below of = 37] {};
        \node[draw, circle] (08) [draw = none, right of = 07] {};
        \node[draw, circle, thick, cyan] (18) [below of = 08] {};
        \node[draw, circle] (28) [below of = 18] {};
        \node[draw, circle] (38) [below of = 28] {};
        \node[draw, circle] (48) [draw = none, below of = 38] {};
        \node[draw, circle] (09) [draw = none, right of = 08] {};
        \node[draw, circle, thick, cyan] (19) [draw = none, below of = 09] {};
        \node[draw, circle] (29) [draw = none, below of = 19] {};
        \node[draw, circle] (39) [draw = none, below of = 29] {};
        \node[draw, circle] (49) [draw = none, below of = 39] {};

        \path
(10) edge [dashed] (11)
(01) edge [dashed] (11)
        (20) edge [dashed] (21)
        (11) edge (21)
        (30) edge [dashed] (31)
        (21) edge [thick, LimeGreen] (31)
        (31) edge [dashed] (41)
        (11) edge (12)
        (02) edge [dashed] (12)
        (21) edge (22)
        (12) edge (22)
        (31) edge (32)
        (22) edge [thick, LimeGreen] (32)
        (32) edge [dashed] (42)
        (12) edge (13)
        (03) edge [dashed] (13)
        (22) edge (23)
        (13) edge (23)
        (32) edge (33)
        (23) edge [thick, LimeGreen] (33)
        (33) edge [dashed] (43)
        (13) edge (14)
        (04) edge [dashed] (14)
        (23) edge (24)
        (14) edge (24)
        (33) edge (34)
        (24) edge [thick, red] (34)
        (34) edge [dashed] (44)
        (14) edge (15)
        (05) edge [dashed] (15)
        (24) edge [thick, red] (25)
        (15) edge [thick, red] (25)
        (34) edge (35)
        (25) edge (35)
        (35) edge [dashed] (45)
        (15) edge (16)
        (06) edge [dashed] (16)
        (25) edge (26)
        (16) edge [thick, LimeGreen] (26)
        (35) edge (36)
        (26) edge (36)
        (36) edge [dashed] (46)
        (16) edge (17)
        (07) edge [dashed] (17)
        (26) edge (27)
        (17) edge [thick, LimeGreen] (27)
        (36) edge (37)
        (27) edge (37)
        (37) edge [dashed] (47)
        (17) edge (18)
        (08) edge [dashed] (18)
        (27) edge (28)
        (18) edge [thick, LimeGreen] (28)
        (37) edge (38)
        (28) edge (38)
        (38) edge [dashed] (48)
        (18) edge [dashed] (19)
        (28) edge [dashed] (29)
        (38) edge [dashed] (39)
        ;
    \end{tikzpicture}

    \caption{Representation of a basic clause for a broken border assignment.}

    \label{fig:tseitin-dp}
\end{figure}

    Now, it only remains to observe that all basic clauses of a broken border enclosing $t$ vertices could be obtained from the basic clauses of a broken border enclosing $t + 1$ vertices in a constant number of steps.
    Repeating this process several times, we obtain all basic clauses for the border enclosing a rectangle of the size $1 \times (n - 1)$.
    Then, we can use the similar process to obtain all basic clauses for a single vertex.
    The basic clauses for a single vertex encode the fact that the parity of edges incident to the vertex is even, while clauses of that vertex encode the fact that it is odd.
    Resolving all these clauses we obtain a contradiction.
\end{proof}

\begin{corollary}\label{cor:Rectangular4}
    \(
        S_{\SRCI[4]}(T_n) = \BigO(n^2)
    \)
\end{corollary}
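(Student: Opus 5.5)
The plan is to reuse the refutation from Theorem~\ref{thrm:Tseitin3} unchanged, and to replace every application of a local $3$-symmetry by an application of a \emph{global} $4$-symmetry. The basic observation is that flipping all four edges of a unit square (a $4$-cycle) of the toroidal grid is a global complementary $4$-symmetry of $T_n$. Every vertex of the grid is incident to either $0$ or $2$ edges of the cycle. Hence for each vertex $v$ the map sends the set of clauses encoding $\sum_{e\ni v} x_e = 1$ onto itself: it preserves the parity of the flipped assignment, so it permutes the forbidden assignments. Consequently the whole input formula is mapped onto itself, as the global rule requires, and only $4$ variables are touched.

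Next I would check that each local $3$-symmetry used in the proof of Theorem~\ref{thrm:Tseitin3} is the restriction of such a $4$-cycle flip. A local $3$-symmetry flips a U-shaped path of exactly three edges, so it is a unit square with one edge removed. The path starts and ends on a (possibly broken) border, and the two end edges are border edges. The missing fourth edge of the square therefore lies among the unused vertices enclosed by the border: it joins the two border-adjacent unused vertices, or lies just beyond the border in the broken case. I would verify on both the regular border and the broken border configurations (Figures~\ref{fig:border} and~\ref{fig:tseitin-dp}) that this fourth edge never occurs in the border clause being transformed. Equally, it never occurs in any clause of the sub-derivation used as justification, since those derivations only resolve clauses of used vertices, and an edge between two unused vertices appears in no such clause.

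Given this, the replacement is immediate. In $\SRCI[4]$ the justification of a symmetry step may be the entire input formula, so from a derived basic clause $A$ we may infer $\phi(A)$ for the global square flip $\phi$. Since the fourth edge is absent from $A$, $\phi(A)$ coincides with the clause that the local $3$-symmetry produced. Thus every step of the $\SRCII[3]$-refutation becomes one step of an $\SRCI[4]$-refutation, and the size bound $\BigO(n^2)$ carries over. The step I expect to need the most care is the bookkeeping in the broken-border case: there one must confirm that every U-shaped path actually used can be chosen so that its closing edge avoids the current clause. If some path did not allow this, I would reroute it through the unused side (the grid has $n \geq 3$ rows, so the adjacent unused row exists until the very end). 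For the final single-vertex stage I would instead resolve directly, which costs only a constant number of extra steps.
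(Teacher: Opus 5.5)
Your proposal is correct and is the paper's argument spelled out in more detail: each U-shaped $3$-edge flip from the proof of Theorem~\ref{thrm:Tseitin3} is completed to a unit-square flip, which is a global complementary $4$-symmetry of $T_n$. It yields the same clause because the closing edge joins two unused vertices and so does not occur in the derived border clause. Your extra check on the sub-derivation is harmless but unnecessary, since in $\SRCI[4]$ the justification is the whole input formula.
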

\begin{proof}
    The proof above used only symmetries that flipped variables along the U-shaped paths in the graph.
    Each such path can be extended to a cycle of length $4$ making a local symmetry global.
\end{proof}

Now, having upper bounds for the size of Tseitin formulas on toroidal grid in Resolution with small symmetries, we are ready to construct formulas for the separations. We start with a family of formulas $\eta_n^{(k)}$ separating larger global and local symmetries (with different parameters) from smaller local symmetries. After it, we construct the family of formulas that separate only systems global symmetries, but with much better parameters.

Let $\eta_n^{(k)}$ be a formula $T_n$ with $k$ additional unique variables added to each clause and unit clauses falsifying all additional variables.

\input{static/figs/rectangular-fig}

\begin{theorem}\label{thrm:eta}
    \begin{align}
        S_{\SRCII[(k + 1)]}(\eta_n^{(k)}) & \geq S_{\Res}(T_n) \label{eqn:lower2} \\
        S_{\SRCI[(32 k + 4)]}(\eta_n^{(k)}) & = \BigO(n^2 k) \label{eqn:upper1} \\
        S_{\SRCII[(16 k + 3)]}(\eta_n^{(k)}) & = \BigO(n^2 k) \label{eqn:upper2}
    \end{align}
\end{theorem}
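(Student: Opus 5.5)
We prove the three bounds separately. The lower bound \eqref{eqn:lower2} comes from Theorem~\ref{thrm:id_clauses}. The two upper bounds come from rerunning the refutations of Theorem~\ref{thrm:Tseitin3} and Corollary~\ref{cor:Rectangular4} after first discharging the unit clauses.

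\textbf{Lower bound \eqref{eqn:lower2}.} Let $X$ be the edge variables of $T_n$ and $Y$ the additional variables. Take $\rho$ to set every variable of $Y$ to false. Then $\eta_n^{(k)}[\rho]$ is exactly $T_n$ (the unit clauses become true and disappear), so it suffices to check the hypothesis of Theorem~\ref{thrm:id_clauses} for $\SRCII[(k+1)]$: every available local $(k+1)$-symmetry maps each main clause to itself.

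Suppose a $(k+1)$-symmetry $\phi$ maps a main clause $C$ to some input clause $\phi(C)\neq C$. Since $\phi$ is a bijection on literals, $\phi(C)$ has the same width as $C$, so $\phi(C)$ is a main clause and not a unit clause. The $k$ additional variables of $C$ occur in no other main clause. So if $\phi(C)$ belongs to a different vertex or is a different clause, $\phi$ must move all $k$ of these unique variables. These variables are unique to $C$ and appear only positively, so they cannot be fixed while $C$ changes.

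If $\phi(C)$ is another clause of the same vertex, $\phi$ must also change the sign or identity of at least one edge variable. That already forces at least $k+1$ moved variables from $C$ alone. Moreover, $\phi(C)$'s unique variables must be images of $C$'s unique variables, so those $k$ variables are moved too. Counting both sets, $\phi$ touches at least $k+1$ variables from $C$'s side and its image set. I expect the careful count to reach at least $k+2$, in particular when the edge variables flip. Pinning this count down exactly is the main point to verify.

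Once every main clause is fixed by $\phi$, Theorem~\ref{thrm:id_clauses} gives $S_{\SRCII[(k+1)]}(\eta_n^{(k)})\ge S_{\Res}(\eta_n^{(k)}[\rho]) = S_{\Res}(T_n)$.

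\textbf{Upper bounds \eqref{eqn:upper1} and \eqref{eqn:upper2}.} First resolve every main clause with the unit clauses of its extra variables. This takes $\BigO(n^2 k)$ steps and recovers all clauses of $T_n$. Then follow the refutation of Theorem~\ref{thrm:Tseitin3}, respectively Corollary~\ref{cor:Rectangular4}.

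The issue is that a symmetry flipping edges along a U-path or a $4$-cycle is no longer a symmetry of $\eta_n^{(k)}$. It maps clauses carrying unique variables to clauses carrying different unique variables. It therefore has to be extended by permuting the extra variables accordingly. For each vertex $v$ touched by the path, map each clause of $v$ to its image clause and send the $k$ extra variables of one to the $k$ extra variables of the other. Within a vertex this is a permutation of its clauses, so it is a well-defined bijection on the $2^{\deg(v)-1}=8$ clauses of that vertex.

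Counting moved variables: the clauses of a degree-$4$ vertex carry $8k$ extra variables. A $4$-cycle touches $4$ vertices, giving $4\cdot 8k=32k$ extra variables plus the $4$ edges, hence a $(32k+4)$-symmetry. This extension is global, because it also maps the unit clauses to unit clauses.

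For the local U-shaped path, only the clauses actually used in the justification need to be mapped. The path touches $3$ edges and effectively two vertices whose clauses are permuted, giving $16k+3$. Verifying that the justification uses clauses of only two vertices, and that the rest of the touched clauses can be left as they are, is the delicate point here.

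The symmetry steps then carry over one-to-one, so the total size is $\BigO(n^2)+\BigO(n^2k)=\BigO(n^2k)$.
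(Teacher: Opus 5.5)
Your overall plan is the paper's. For \eqref{eqn:lower2} you apply Theorem~\ref{thrm:id_clauses} with $\rho$ falsifying all extra variables. For \eqref{eqn:upper1} and \eqref{eqn:upper2} you unit-resolve the extra variables away, rerun the refutations of Theorem~\ref{thrm:Tseitin3} and Corollary~\ref{cor:Rectangular4}, and extend each flip by permuting extra variables: those of the two base vertices of the U give $16k+3$, and those of all four vertices of the $4$-cycle give $32k+4$. The upper bounds are fine, and the point you call delicate is immediate. The two endpoints of the U lie in the unused region enclosed by the border. So the only justification clauses containing flipped edges are those of the two base vertices, each with exactly two flipped edges, and only their $16k$ extra variables (with the corresponding unit clauses) need to be permuted.

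The genuine gap is in the lower bound, which is where the real verification lies. Your count stops at ``at least $k+1$ moved variables''. That does not exclude a $(k+1)$-symmetry, and you leave $k+2$ as an expectation.

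The missing observation: for a main clause $C$ with $C'=\phi(C)\neq C$, the extra-variable sets $U(C)$ and $U(C')$ are disjoint, and every variable in both is moved.
\begin{itemize}
\item Each $u\in U(C)$ is sent into $C'$, which does not contain $u$.
\item Each $u'\in U(C')$ does not occur in $C$, so it is the image of a different variable. By injectivity of $\phi$ on variables, $u'$ itself is not fixed.
\end{itemize}
Your claim that $U(C')$ consists of images of $U(C)$ is unjustified, since an edge literal of $C$ may be sent to an extra literal of $C'$; but you do not need it.

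This already gives $2k$ moved variables, and some edge variable must move as well. If all four edge variables of $C$ were fixed, $C'$ would contain the same four edge literals, i.e.\ carry the same Tseitin clause, forcing $C'=C$. Hence at least $2k+1\ge k+2$ variables move when $k\ge 1$. So every available $(k+1)$-symmetry fixes every main clause, and Theorem~\ref{thrm:id_clauses} applies. The paper compresses exactly this count into the remark that $T_n$ has no $1$-symmetries.
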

\begin{proof}
    Since $T_n$ has no $1$-symmetries, all $(k + 1)$-symmetries of $\eta_n^{(k)}$ satisfy the conditions of Theorem~\ref{thrm:id_clauses}.
    This gives us the lower bound.

    On the other hand, we can still repeat the proof from Theorem~\ref{thrm:Tseitin3} after using Resolution steps to eliminate all additional variables.
    The only obstacle is to make sure that our system is able to perform the symmetry step that flips two variables along a U-shaped path, only two vertices of which (the ones that are in the base of the letter U) were used in the inference.
    In the case of local symmetries, it is sufficient to add $16k$ additional nontrivially mapped variables to the symmetry, while in the case of global symmetries we need to permute clauses of all four vertices which results in a $(32k + 4)$-symmetry.
\end{proof}

Consider formula $T_n$.
In order to separate $\SRCI[k]$ from $\SRCI[(k + 1)]$, we make the following modifications to this formula.

\begin{itemize}
    \item Split each horizontal edge into $a$ smaller edges and each vertical edge into $b$ smaller edges.
    We refer to all newly added vertices as \emph{small vertices} and to all original vertices as \emph{big vertices}.
    We set the charge of all small vertices to zero to enforce the constraint that each edge is split into smaller edges of the same value.
    \item Add $k$ new variables to each clause corresponding to each big vertex and add unit-clauses falsifying all such variables.
    We refer to these variables as \emph{auxiliary variables}.
\end{itemize}

Let $\xi_{n}^{(a, b, k)}$ be the resulting formula.
Note that the original formula is $\xi_{n}^{(1, 1, 0)}$.
Clearly, $\xi_{n}^{(a, b, k)}$ consists of $\BigO(n^2 (k + a + b))$ variables and clauses.

\begin{figure}
    \centering

    \begin{tikzpicture}[auto]
        \node[draw] (rc11) [draw = none] {$\SRCI[(k + 1)]$};
        \node[draw] (rc12) [draw = none, above = of rc11] {$\SRCI[(k + 2)]$};
        \node[draw] (rc13) [draw = none, above = of rc12] {$\SRCI[(16 k + 3)]$};
        \node[draw] (rc14) [draw = none, above = of rc13] {$\SRCI[(32 k + 4)]$};

        \node[draw] (rc21) [draw = none, right = 4cm of rc11] {$\SRCII[(k + 1)]$};
        \node[draw] (rc22) [draw = none, above = of rc21] {$\SRCII[(k + 2)]$};
        \node[draw] (rc23) [draw = none, above = of rc22] {$\SRCII[(16 k + 3)]$};
        \node[draw] (rc24) [draw = none, above = of rc23] {$\SRCII[(32 k + 4)]$};

        \path[|->]
        (rc12) edge [thick, "Cor~\ref{cor:heirarchy_rectangular} $\parens*{\substack{k \geq 2\\k\text{ is even}}}$"] (rc11)
        (rc13) edge [thick] (rc12)
        (rc14) edge [thick] (rc13)
        ;

        \path[->]

        (rc22) edge [thick] (rc21)
        (rc23) edge [thick] (rc22)
        (rc24) edge [thick] (rc23)

        (rc21) edge [thick] (rc11)
        (rc22) edge [thick] (rc12)
        (rc23) edge [thick] (rc13)
        (rc24) edge [thick] (rc14)
        ;

        \path[|->]
        (rc23) edge [thick, out = -25, in = 25, "Thm~\ref{thrm:eta}"] (rc21)
        ;

        \path[|-]
        (rc14) edge [thick, sloped, "Thm~\ref{thrm:eta}"] (rc21)
        ;
    \end{tikzpicture}

    \caption{Relations between proof systems that follow from the hierarchy theorems for the rectangular grid.}

    \label{fig:diagram_tseitin_rectangular}
\end{figure}

\begin{lemma}
    Let $\phi$ be any global $k$-symmetry of $\xi_{n}^{(a, b, k)}$.
    Then, $\phi(x) \in \set{x, \lnot x}$ for every non-auxiliary variable.
\end{lemma}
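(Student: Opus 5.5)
We argue by contradiction: assume $\phi$ is a global $k$-symmetry of $\xi_{n}^{(a, b, k)}$ and some non-auxiliary variable $x$ has $\phi(x) \notin \set{x, \lnot x}$. Such a variable is an edge variable, that is, a piece of a split edge. We count the variables that $\phi$ must move nontrivially and show there are at least $k+1$, contradicting the $k$-symmetry assumption.

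First we classify clauses by their width and their variable structure, since a global symmetry maps clauses to clauses bijectively and preserves width. Unit clauses are exactly the ones that falsify auxiliary variables. Clauses of small vertices have width $2$, since small vertices have degree $2$. Clauses of big vertices have width $4 + k$. For $k \geq 1$ these three classes are distinguished by width; the case $k = 0$ is not needed, because a global $0$-symmetry is the identity. Hence $\phi$ permutes the big-vertex clauses among themselves, and the variable sets of these clauses (a big vertex's four incident edge pieces together with its $k$ private auxiliary variables) are also permuted. Now suppose $\phi$ maps an edge variable incident to a big vertex $v$ to a variable other than itself or its negation. Then some clause of $v$ is sent to a clause whose variable set is different, either another big vertex's set or a different arrangement. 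If $\phi$ sends the clause set of $v$ to that of a different big vertex $u$, then all $k$ auxiliary variables of $v$ must map onto those of $u$, since the auxiliary variables are the only ones private to each vertex. This already moves $k$ auxiliary variables, and at least one edge variable moves as well, giving at least $k+1$ nontrivially moved variables.

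The remaining case is that $\phi$ fixes every big vertex's variable set but permutes variables inside it, for example by swapping two edge pieces at $v$, or moves pieces in the interior of a subdivided edge. Here we use the path structure of the subdivided edges. The width-$2$ clauses form chains, and a parity constraint of degree $2$ is a small vertex identifying its two neighbouring pieces. Consider the graph on literals/variables whose edges come from co-occurrence in width-$2$ clauses. This graph consists of paths, one per original edge, of length $a$ or $b$, and $\phi$ must preserve it. So if $\phi$ moves one piece of an edge to a piece of a different edge, it moves whole chains. Two subcases arise. If the two chains are distinct, at least $2\min(a,b)$ variables are moved. If the move is a reversal within one chain, then the endpoints of the chain, which lie in big-vertex clauses, are moved; this forces a nontrivial action on a big vertex's set, which by the previous paragraph costs the auxiliary variables, unless the chain's two ends lie at the same big vertex, which does not happen in the toroidal grid for $n \geq 3$.

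The main obstacle is this last case. We must rule out symmetries that move fewer than $k+1$ variables purely inside the edge pieces, for instance swapping the two pieces at the ends of distinct chains meeting at $v$, or an intra-chain map. The plan is to show that any edge-piece move at a big vertex changes that vertex's clause set unless compensated along the whole chain. A compensated move costs at least $a$ or $b$ variables, and since the construction is meant to be used with $a, b \ge k+1$ (I expect this to be implicit), this exceeds $k$. An uncompensated move instead forces the clause set of $v$ onto another vertex, and that costs $k$ auxiliary variables plus the moved edge variable.
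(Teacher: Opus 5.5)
Your first case is correct and is the paper's key counting step: if an edge piece incident to a big vertex $v$ is sent outside the variable set of $v$, then the clauses of $v$ go to the clauses of some $u \neq v$. All $k$ auxiliary variables of $v$ then move, and together with the piece itself that is more than $k$. (Since unit clauses are permuted, auxiliary variables only go to auxiliary variables, so a piece cannot land on one.)

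The gap is the remaining case, where $\phi$ preserves every big vertex's variable set but, say, sends the piece of one subdivided edge at $v$ to the piece of another subdivided edge at $v$. There you count the moved chains ($2\min(a,b)$, or ``at least $a$ or $b$'') and then assume $a,b \geq k+1$. That hypothesis is not in the statement. It also fails exactly where the lemma is used: the paper takes $\xi_n^{(a,b,a+b)}$, i.e.\ $k=a+b \geq 2\min(a,b)$. Concretely, for $a=b=1$ and $k=2$, swapping the two horizontal edges at $v$ moves only $2 \le k$ variables, so no count of chain lengths excludes it. Your last paragraph is only a plan for precisely this case.

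The missing idea is that this case needs no counting at all: it reduces to your first case at the \emph{other} end of the chain. Components of the width-$2$ co-occurrence graph are mapped onto components, so a chain from $v$ to $u$ is mapped onto a chain, end pieces to end pieces. By the first case applied at $v$ and at $u$, the end piece at $v$ must go to a piece incident to $v$, and the end piece at $u$ to a piece incident to $u$. Hence the image chain also joins $v$ and $u$. For $n \ge 3$ the toroidal grid has only one edge between two adjacent vertices (the four neighbours of $v$ are distinct), so the chain is mapped to itself with its orientation preserved. For $a=1$ or $b=1$ the chain is a single piece incident to both $v$ and $u$, and the same uniqueness applies. Equivalently, as in the paper: if the chain from $v$ to $u$ were sent to the chain from $v$ to $u' \neq u$, its far end piece would leave the variable set of $u$, costing the $k$ auxiliary variables of $u$. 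You already use this endpoint argument in your reversal subcase. Applying it to the distinct-chain subcase, and then propagating along each chain to fix interior pieces up to sign, closes the proof without any assumption relating $a,b$ to $k$.
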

\begin{proof}
    All edges adjacent to a big vertex could be mapped only into edges adjacent to the same vertex, otherwise, all auxiliary vertices of this big vertex should be mapped non-trivially, and there are $k$ of them.
    If an edge adjacent to a big vertex is mapped into another edge adjacent to this vertex, then an entire path to some other big vertex starting in that edge should also be mapped to somewhere outside this path.
    This is impossible, since the path ends with an edge adjacent to a big vertex.
    Thus, all edges, adjacent to a big vertex are mapped to either themselves or their negations.
    Therefore, every other non-auxiliary edge is also mapped to either itself or its negation, since it lies on a path between two big vertices.
\end{proof}

\begin{corollary}
    All $k$-symmetries of $\xi_{n}^{(a, b, k)}$ are inversions of sets of variables corresponding to elements of the cycle space of the constructed graph.
\end{corollary}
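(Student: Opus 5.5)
By the preceding lemma, every global $k$-symmetry $\phi$ of $\xi_{n}^{(a, b, k)}$ satisfies $\phi(x) \in \set{x, \lnot x}$ for every non-auxiliary variable. The plan is first to pin down the auxiliary variables, then to identify the set of flipped edges with a cycle-space element.

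\textbf{Step 1: auxiliary variables are fixed.} Each auxiliary variable occurs positively in main clauses and negatively in a unit clause. A global symmetry maps unit clauses to unit clauses, so it maps the set of falsifying units to itself. It follows that auxiliary literals go to negated-auxiliary literals, never to complemented ones. Any auxiliary variable that is moved to a different auxiliary variable must belong to the same big vertex, since it has to occur in exactly the same main clauses. Such a move only permutes the disjuncts inside each clause, so it acts trivially on clauses. I will therefore treat $\phi$ as acting only on the edge variables, modulo these harmless permutations.

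\textbf{Step 2: cycle-space condition.} Let $F$ be the set of edge variables that $\phi$ negates. Fix a vertex $v$, big or small. Its clauses are exactly all assignments to its incident edges that violate the parity constraint, padded with the auxiliary variables. Negating the edges of $F$ incident to $v$ maps this clause set to itself if and only if it preserves the parity of every such assignment. That holds exactly when $\abs{F \cap \delta(v)}$ is even. The reason is that negating an odd number of the literals swaps the violating assignments with the satisfying ones, and then the image clauses are not in the formula. Since $\phi$ is global, this must hold at every vertex, so $F$ is an even-degree edge set, i.e.\ an element of the cycle space. Conversely, inverting any cycle-space element of size at most $k$ is a global symmetry.

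The main obstacle I expect is the bookkeeping in Step 1, specifically ruling out a complementing map on an auxiliary variable. I would handle it via the unit clauses, since a complemented auxiliary literal would have to appear positively as a unit clause, and no such clause exists. I would also note that the statement should be read up to the trivial permutations of auxiliary variables within the same big vertex.
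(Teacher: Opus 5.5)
Your proof is correct and follows the route the paper intends. The corollary comes from the preceding lemma (every non-auxiliary variable goes to itself or its negation), together with your Step~2 observation that a sign-flip preserves a vertex's parity clauses exactly when it flips an even number of that vertex's incident edges. Your caveat about auxiliary variables is also right: swapping two auxiliary variables of the same big vertex is a genuine global $2$-symmetry. So the statement must be read as describing the action on the edge variables only, which is all that Theorem~\ref{thrm:lower_heirarchy} uses.
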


\begin{theorem}\label{thrm:lower_heirarchy}
    \(
        S_{\SRCI[(2a + 2b - 1)]}(\xi_{n}^{(a, b, a + b)}) \geq S_{\Res}(T_n)
    \)
\end{theorem}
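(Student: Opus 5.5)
Here $k$ in the lemma is $a+b$, the number of auxiliary variables per big-vertex clause, and the symmetry bound is $K = 2a+2b-1$. The plan is to apply Theorem~\ref{thrm:id_clauses} with $X$ the non-auxiliary (edge) variables and $Y$ the auxiliary variables. The substitution $\rho$ sets every auxiliary variable to $0$. Then $\xi_n^{(a,b,a+b)}[\rho]$ is just the subdivided Tseitin formula: the unit clauses become true and the auxiliary literals disappear. Resolution refutations of this formula have size at least $S_{\Res}(T_n)$. To see this, substitute for each subdivided chain the identification of all its small edges with a single variable. Each small-vertex clause then becomes trivially true or a tautology. The big-vertex clauses become exactly the clauses of $T_n$. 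Since restrictions and variable identifications do not increase Resolution size, the bound follows.

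So the main work is to verify the hypothesis of Theorem~\ref{thrm:id_clauses}: every global $K$-symmetry $\phi$ maps each clause containing edge variables to itself. First I need the analogue of the preceding lemma for $K$-symmetries rather than $(a+b)$-symmetries. Suppose $\phi$ moves an edge adjacent to a big vertex $v$ to an edge adjacent to a different big vertex. Then the clauses of $v$ are mapped to clauses of another vertex, so all $a+b$ auxiliary variables of $v$ must move. The same argument applied to $\phi^{-1}$ shows that the auxiliary variables of the image vertex also move. That is $2(a+b) > K$ variables, a contradiction. A permutation within the edges of one big vertex forces a whole subdivided path (of length $a$ or $b$) to be mapped off itself, which fails as in the lemma. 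Hence $\phi$ maps each edge variable to itself or its negation, and by the corollary the set of flipped edges is an element $F$ of the cycle space of the subdivided torus.

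Next I bound the size of $F$. Every nonzero element of the cycle space of the subdivided torus contains a cycle. The shortest cycle in the subdivided torus is a unit square, with length $2a+2b$ (for $n$ large, non-contractible cycles are longer). Hence any nontrivial $F$ flips at least $2a+2b > K$ variables. Therefore $\phi$ is the identity on edge variables, and it may act only on auxiliary variables.

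It remains to show that a clause containing edge variables is fixed. Small-vertex clauses contain only edge variables, so they are fixed. Big-vertex clauses are fixed on their edge part, and $\phi$ must map the clause to an input clause. Since the edge part determines the vertex and the clause, and auxiliary variables of a vertex occur in all its clauses, $\phi$ must permute or negate auxiliary variables of $v$ within that clause's auxiliary set. A clause with the same edge literals must be the same clause, so $\phi$ fixes the clause as a set. The step I expect to be most delicate is the first one, excluding symmetries that move edges between big vertices. The counting has to use both $\phi$ and $\phi^{-1}$ (or source and target vertex) to reach the $2(a+b)$ bound, and it has to rule out mixed cases where only some auxiliary literals are moved.
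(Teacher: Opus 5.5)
Your proposal is correct and takes essentially the paper's route. You show that every global $(2a+2b-1)$-symmetry flips an element of the cycle space of the subdivided torus, use the girth $2a+2b$ to conclude it is trivial on edge variables, then set the auxiliary variables to $0$ and identify each subdivided path with a single variable to recover a Resolution refutation of $T_n$. Your explicit appeal to Theorem~\ref{thrm:id_clauses} and your $\phi$/$\phi^{-1}$ count, which gives $2(a+b)$ moved auxiliary variables, usefully tighten the argument, since the paper's lemma is literally stated only for $(a+b)$-symmetries. The ``mixed case'' you worry about cannot occur: no auxiliary variable of $v$ appears in a clause of $w \neq v$, so all of them must move.
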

\begin{proof}
    The previous corollary together with the fact that the girth of the constructed graph is $2a + 2b$ show that all $(2a + 2b - 1)$-symmetries work trivially on all non-auxiliary variables.
    Now, we can mark one arbitrary edge on each path and identify each variable with the variable corresponding to the marked edge on its path or its negation.
    After removing all auxiliary variables and clearing all clauses that became trivial after the previous transformation, we obtain a valid Resolution refutation of $T_n$.
\end{proof}

\begin{theorem}\label{thrm:upper_heirarchy}
    \(
        S_{\SRCI[2(a + b)]}(\xi_{n}^{(a, b, a + b)}) = \BigO(n (a + b) + n^2)
    \)
\end{theorem}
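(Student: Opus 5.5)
The plan is to contract the formula back to the unsubdivided grid with plain Resolution, and then re-run the proof of Corollary~\ref{cor:Rectangular4}, replacing each $4$-cycle flip by a flip of the corresponding subdivided $4$-cycle.

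\textbf{Preprocessing.} First I resolve away the $a+b$ auxiliary variables of every big-vertex clause using the unit clauses.

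\textbf{Contracting the paths.} Next I collapse each subdivided path to a single variable. A path of length $a$ or $b$ consists of small vertices with charge zero. Each small vertex has degree $2$, so its two clauses state $x_e \leftrightarrow x_{e'}$. Chaining these equivalences along the path lets me substitute the variable of one marked edge (say, the edge adjacent to a fixed endpoint) into the big-vertex clauses at the other endpoint. Concretely, each big-vertex clause is resolved against the chain clauses of its incident paths. A big vertex has degree $4$ with $8$ clauses, so doing this with a shared derivation costs $\BigO(a+b)$ per path. Over the $2n^2$ paths this gives $\BigO(n^2(a+b))$.

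That exceeds the claimed $\BigO(n(a+b)+n^2)$, so I must avoid contracting every path explicitly. My plan is to contract only the paths in the first row and first column, which costs $\BigO(n(a+b))$. I then run the inference of Theorem~\ref{thm:Tseitin3} treating each remaining path as a "long edge". The proof of Theorem~\ref{thrm:Tseitin3} only resolves on vertex clauses and only needs each border edge as a literal. So a long edge can be represented by the literal of its endpoint edge adjacent to the big vertex being consumed. When the process crosses a long edge, I derive its equivalence on the fly. The dynamic programming consumes big vertices, so the small-vertex chains can be resolved lazily. A chain is needed only when an edge passes from border to interior, and I expect that each path is entered once, and that this entry stays close to a constant number of steps.

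If that accounting fails, I will fall back on the symmetry rule. The chain derivations for different paths of the same orientation are isomorphic, but as global symmetries they are not small, so this fallback is doubtful.

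\textbf{Symmetries.} This step is cleaner. In the proof of Corollary~\ref{cor:Rectangular4}, every symmetry flips the edges of a $4$-cycle of the grid. In $\xi_n^{(a,b,a+b)}$ the corresponding cycle consists of two horizontal paths and two vertical paths, so it has exactly $2a+2b$ edges. Flipping all of them is a global complementary symmetry, because every vertex, big or small, sees an even number of flipped edges, so its clause set is mapped to itself. It also fixes all auxiliary variables. Hence it is a $2(a+b)$-symmetry of the whole formula, and every symmetry step of the previous proof transfers verbatim, at the same count of $\BigO(n^2)$ steps.

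\textbf{Main obstacle.} The hard part is the step count: getting the contraction overhead down to $\BigO(n(a+b))$ rather than $\BigO(n^2(a+b))$. I would first attempt to show that interior paths never need full contraction. The idea is that a border assignment mentions only one edge per crossing path, and the basic clauses set all but $\BigO(1)$ border edges to zero. Under that restriction, the chain clauses of a path can be absorbed via symmetry steps that act on the whole path, using the $4$-cycle flips above, instead of separate resolution steps.
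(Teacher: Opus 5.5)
Your symmetry paragraph is correct and matches the paper. After collapsing to $T_n$, each $4$-cycle flip of the $\SRCI[4]$ refutation from Corollary~\ref{cor:Rectangular4} becomes the flip of the subdivided cycle. That flip is a $2(a+b)$-symmetry fixing the auxiliary variables, and only four of its variables occur in the derived clauses. The paper's whole proof is ``infer $T_n$ first, then replay''. That is your first, full-contraction plan, which you discarded. The gap is everything after it. The lazy ``long edge'' accounting is only something you expect, you call the fallback doubtful yourself, and the last paragraph is a program rather than an argument, so the count $\BigO(n(a+b)+n^2)$ is never established. You also missed that your own preprocessing step is already too expensive. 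Resolving $a+b$ auxiliary variables out of $8n^2$ big-vertex clauses costs $\Theta(n^2(a+b))$ lines. It stays $\Theta(n^2(a+b))$ even if you derive one clause per vertex and obtain the other seven by cycle flips.

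No refinement can close this gap when $a+b$ is unbounded, because every $\SRCI[2(a+b)]$ refutation of $\xi_n^{(a,b,a+b)}$ then has at least $n^2(a+b)$ lines. Let $Z_v$ be the auxiliary variables of a big vertex $v$.
\begin{enumerate}
\item Auxiliary literals are the only literals whose negations are unit clauses. So a global symmetry maps $Z_v$ onto some $Z_w$ and the clauses of $v$ onto those of $w$. If $w\neq v$, it moves all $2(a+b)$ variables of $Z_v\cup Z_w$ and edges of $v$ besides, which is too many. Hence every allowed symmetry permutes each $Z_v$ and fixes the clause set of each big vertex.
\item Unwinding the symmetry steps, some clause of every big vertex $v$ must lie in the dependency closure of the empty clause. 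Otherwise the unwound Resolution refutation would avoid the constraint of $v$, and the formula without that constraint is satisfiable.
\item Follow a dependency path from such a clause to the empty clause. The number of $Z_v$-literals falls from $a+b$ to $0$. Symmetry steps preserve this number, and a resolution step lowers it by at most one, and only when its pivot lies in $Z_v$.
\end{enumerate}
So each big vertex needs $a+b$ resolution steps of its own. The stated bound $\BigO(n(a+b)+n^2)$ is therefore attainable only when $a+b=\BigO(1)$. That is also all the paper's one-line proof delivers, since inferring $T_n$ costs $\Theta(n^2(a+b))$, exactly your first estimate. It is also all that Corollary~\ref{cor:heirarchy_rectangular} needs. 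The right move is to stop at full contraction plus lifted cycle flips and conclude $\BigO(n^2(a+b))$. This is $\BigO(n^2)$ for fixed $a,b$.
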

\begin{proof}
    We can infer $T_n$ first and then construct the proof, similar to the proof of Theorem~\ref{thrm:Tseitin3}.
    The only difference is that $4$-symmetries from the original proof become $2(a + b)$-symmetries, in which almost all variables (except $4$) never appear in the resulting clause.
\end{proof}

\begin{corollary}\label{cor:heirarchy_rectangular}
    For every even $k \geq 4$, there exists a family of formulas $\Phi_n$ such that the size of the shortest $\SRCI[k]$-proof of $\Phi_n$ is polynomial, but the size of the shortest $\SRCI[(k - 1)]$-proof is exponential.
\end{corollary}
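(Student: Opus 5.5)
Given an even $k \geq 4$, I choose positive integers $a, b$ with $2(a+b) = k$, for instance $a = 1$ and $b = k/2 - 1$; this requires $k/2 \geq 2$, which is why $k \geq 4$ is needed. I then take $\Phi_n = \xi_n^{(a,b,a+b)}$ for odd $n$. The corollary will follow by combining Theorems~\ref{thrm:upper_heirarchy} and~\ref{thrm:lower_heirarchy} with a known exponential Resolution lower bound for $T_n$.

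For the upper bound, I apply Theorem~\ref{thrm:upper_heirarchy} with $2(a+b) = k$. It gives $S_{\SRCI[k]}(\Phi_n) = \BigO(n(a+b) + n^2)$. For fixed $k$ this is $\BigO(n^2)$, which is polynomial in the number of variables $\BigO(n^2(k + a + b))$. If $k$ is allowed to grow with $n$, the bound $\BigO(nk + n^2)$ is still polynomial in the formula size.

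For the lower bound, I apply Theorem~\ref{thrm:lower_heirarchy}. Since $2a + 2b - 1 = k - 1$, it gives $S_{\SRCI[(k-1)]}(\Phi_n) \geq S_{\Res}(T_n)$. The toroidal $n \times n$ grid is a constant-degree graph with expansion $\Omega(n)$, so Urquhart's bound and the Ben-Sasson--Wigderson width--size relation give $S_{\Res}(T_n) = 2^{\Omega(n)}$. Alternatively, Theorem~\ref{thrm:haastad} with $d = 1$ already yields a bound of the form $\exp(n^{\Omega(1)})$, which suffices for "exponential" in the sense of superpolynomial in $n$.

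The main point needing care is the scale of "exponential". The formula has $N = \Theta(n^2 k)$ variables, so $2^{\Omega(n)} = 2^{\Omega(\sqrt{N/k})}$. This is exponential in a polynomial of $N$ when $k$ is constant, and superpolynomial whenever $k$ is not too large. I would state the result with $n$ as the growing parameter, as the paper's notion of exponential separation does. I would also note that the two theorems are invoked with the auxiliary-variable count exactly $a + b$, matching the girth argument behind Theorem~\ref{thrm:lower_heirarchy}, so no further adjustment is needed.
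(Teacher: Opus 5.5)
Your proposal is correct and is exactly the paper's argument. You pick $a,b\geq 1$ with $2(a+b)=k$, which is possible precisely when $k\geq 4$ is even, and combine Theorems~\ref{thrm:upper_heirarchy} and~\ref{thrm:lower_heirarchy} for $\xi_n^{(a,b,a+b)}$, taking the Resolution hardness of $T_n$ as known.

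One correction on that last ingredient. The Urquhart/Ben-Sasson--Wigderson justification does not work here. The toroidal grid is not an expander: balanced cuts have only $\Theta(n)$ edges while there are $\Theta(n^2)$ variables, so the size--width relation yields only $\exp(\Omega(1))$. Rely instead on your fallback via Theorem~\ref{thrm:haastad}, which gives $\exp(n^{\Omega(1)})$. Alternatively, use the $2^{\Omega(n)}$ lower bound of Dantchev and Riis for Tseitin formulas on the planar $n\times n$ grid, which transfers to $T_n$ by setting the wrap-around edges to $0$.
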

\begin{proof}
    Directly follows from two theorems above.
\end{proof}

\subsection{Triangular Grid}\label{sec:triangular}

\begin{figure}
    \centering

    \begin{tikzpicture}[auto]
        \node[draw] (rc11) [draw = none] {$\SRCI[(k + 1)]$};
        \node[draw] (rc12) [draw = none, above = of rc11] {$\SRCI[(k + 2)]$};
        \node[draw] (rc13) [draw = none, above = of rc12] {$\SRCI[(16 k + 3)]$};
        \node[draw] (rc14) [draw = none, above = of rc13] {$\SRCI[(24 k + 3)]$};

        \node[draw] (rc21) [draw = none, right = 4cm of rc11] {$\SRCII[(k + 1)]$};
        \node[draw] (rc22) [draw = none, above = of rc21] {$\SRCII[(k + 2)]$};
        \node[draw] (rc23) [draw = none, above = of rc22] {$\SRCII[(16 k + 3)]$};
        \node[draw] (rc24) [draw = none, above = of rc23] {$\SRCII[(24 k + 3)]$};

        \path[|->]
        (rc12) edge [thick, "Thm~\ref{thrm:xi_prime} ($k \geq 1$)"] (rc11)
        (rc13) edge [thick] (rc12)
        (rc14) edge [thick] (rc13)
        ;

        \path[->]
        (rc22) edge [thick] (rc21)
        (rc23) edge [thick] (rc22)
        (rc24) edge [thick] (rc23)
        ;

        \path[|->]
        (rc21) edge [thick, "Thm~\ref{thrm:hierarchy_small} ($k \geq 2$)"] (rc11)
        (rc22) edge [gray] (rc12)
        (rc23) edge [gray] (rc13)
        (rc24) edge [thick] (rc14)
        ;

        \path[|->]
        (rc23) edge [thick, out = -25, in = 25, "Thm~\ref{thrm:eta_prime}"] (rc21)
        ;

        \path[|-|]
        (rc14) edge [thick, sloped, "Thm~\ref{thrm:hierarchy_small} ($k \geq 2$)", "Thm~\ref{thrm:eta_prime}" swap] (rc21)
        ;
    \end{tikzpicture}

    \caption{Relations between proof systems that follow from the hierarchy theorems for the triangular grid.}

    \label{fig:diagram_tseitin_triangular}
\end{figure}

Let $T'_{n}$ be a Tseitin contradiction on a graph constructed as a rectangular toroidal grid with diagonals of the type $(i, j)-(i + 1, j + 1)$ added to each rectangle (see Figure~\ref{fig:triangular}).

\begin{lemma}
    \(
        S_{\text{depth-$d$ Frege}}(T'_n) \geq S_{\text{depth-$d$ Frege}}(T_n).
    \)
\end{lemma}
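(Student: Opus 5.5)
We reduce $T_n$ to $T'_n$ by a substitution (restriction) that keeps each clause of $T'_n$ a shallow formula in the variables of $T_n$, and then invoke the standard fact that depth-$d$ Frege is closed under restrictions up to an additive constant in depth and a polynomial factor in size. The triangular graph contains the rectangular toroidal grid as a spanning subgraph: it is the grid plus the diagonals $(i,j)$--$(i+1,j+1)$. If we set every diagonal variable $x_e$ to $0$, each parity constraint of $T'_n$ at a vertex $v$ collapses to the parity constraint of $T_n$ at $v$. The vertex set and the charges are unchanged, with $\alpha\equiv 1$ and the same odd number of vertices, so $T'_n[\rho]$ is literally $T_n$ up to trivially satisfied and duplicated clauses.

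First, let $\rho$ set all diagonal variables to $0$. Each clause of $T'_n$ at $v$ forbids one odd-violating assignment to the $6$ incident edges. Under $\rho$, a clause either becomes satisfied (if it contains a positive diagonal literal, i.e.\ it forbids a diagonal value $0$) or shrinks to the clause of $T_n$ forbidding the corresponding assignment to the $4$ grid edges. Every clause of $T_n$ arises this way, from the clause of $T'_n$ whose forbidden assignment puts $0$ on the diagonals. Hence $T'_n[\rho]$ consists exactly of the clauses of $T_n$ together with the constant $1$.

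Second, take a depth-$d$ Frege refutation $\pi$ of $T'_n$ and apply $\rho$ to every line. Restricting formulas by constants, followed by the usual simplification ($\lor$ with $1$ becomes $1$, $\land$ with $1$ is dropped, and so on), maps Frege rule instances to valid instances or to derivations of constant size. The depth does not increase and the size does not increase beyond a constant factor. The lines that were axioms of $T'_n$ become axioms of $T_n$ or the tautology $1$, which is removed. We thus obtain a depth-$d$ refutation of $T_n$ of size at most $O(|\pi|)$, giving the inequality up to the constant factor hidden in the convention for $S$. If one wants the literal inequality, we note that depth-$d$ Frege size is robust to such constant factors in this context, or we define size so that restriction is non-increasing.

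The main obstacle is the closure of Frege under restrictions: one must check that every inference rule of the fixed Frege system remains sound after constant substitution and simplification without increasing depth. I would handle this by leaving constants in place rather than simplifying. A substitution instance of a rule is again an instance of the same rule, and with constants $0,1$ allowed as formulas (or replaced by $x\land\lnot x$ and $x\lor\lnot x$ for a fixed variable $x$) the depth stays $d$ and the size is unchanged up to a constant. The only remaining point is that the leaf axioms become clauses of $T_n$ padded with false constants, which a constant number of steps per axiom turns into the exact clauses.
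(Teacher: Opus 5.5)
Your proposal is correct and is the paper's argument: restrict every diagonal variable to $0$, observe that $T'_n$ collapses to $T_n$ plus satisfied clauses, and push the restriction through the depth-$d$ Frege refutation. You spell out the clause-level check and the closure-under-restriction bookkeeping that the paper leaves implicit. One small sign slip: under $x_d = 0$ it is the clauses containing $\lnot x_d$ (those forbidding diagonal value $1$) that become satisfied, not those with a positive diagonal literal; your next sentence, which keeps the clauses whose forbidden assignment is $0$ on both diagonals, is the correct version.
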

\begin{proof}
    Setting all diagonals to zero maps $T'_n$ to $T_n$ and any proof of $T'_n$ to a proof of $T_n$.
\end{proof}

\begin{remark}
    The proof of $T'_n$ in Resolution with global symmetries analogous to the proof in Corollary~\ref{cor:Rectangular4} uses $3$-symmetries.
    Thus,
    \(
        S_{\SRCI[3]}(T'_n) = \BigO(n^2).
    \)
\end{remark}

We want to construct a formula similar to $\eta_n^{(k)}$, but with the base formula $T'_n$.
A naive approach---adding $k$ new variables to each clause together with unit clauses falsifying them---leads to worse constants because each vertex constraint in $T'_n$ consists of $32$ clauses, compared to $8$ in $T_n$.
Hence, we need to change the way we add the variables.
For each vertex, we arbitrarily choose $3$ out of $4$ vertical or horizontal edges and split all $32$ clauses into $8$ groups of $4$ clauses with respect to the signs of those three edges.
In each group, we add $k$ new variables to each clause and $k$ unit clauses falsifying them.
Added variables are unique for each group, but they are the same for the clauses inside the group.
As a result, each vertex constraint contains only $8k$ additional variables.
We call the constructed formula $\eta_n'^{(k)}$.

\begin{theorem}\label{thrm:eta_prime}
    \begin{align}
        S_{\SRCII[(k + 1)]}(\eta_n'^{(k)}) & \geq S_{\Res}(T_n)\label{eqn:lower_prime} \\
        S_{\SRCI[(24 k + 3)]}(\eta_n'^{(k)}) & = \BigO(n^2 k)\label{eqn:upper1prime} \\
        S_{\SRCII[(16 k + 3)]}(\eta_n'^{(k)}) & = \BigO(n^2 k)\label{eqn:upper2prime}
    \end{align}
\end{theorem}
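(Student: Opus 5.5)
The plan follows the proof of Theorem~\ref{thrm:eta}: the lower bound comes from Theorem~\ref{thrm:id_clauses}, and the upper bounds come from re-running the refutation of $T'_n$ once the extra variables are gone, counting how many auxiliary variables each U-shaped or triangular flip must move.

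\textbf{Lower bound~\eqref{eqn:lower_prime}.} Take $X$ to be the edge variables of $T'_n$ and $Y$ the auxiliary variables. The first step is to show that no nontrivial local $(k+1)$-symmetry of the main clauses moves an edge variable. $T'_n$ has no $1$-symmetries, since flipping a single edge changes the parity at both of its endpoints. So any symmetry that moves an edge literal must map some main clause to a main clause of a different group: either a different group of the same vertex (differing in the signs of the three chosen edges) or a group of another vertex.

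Each group carries its own $k$ private auxiliary variables, shared by the clauses inside it. Mapping a clause of one group onto a clause of another therefore forces all $k$ auxiliary variables of the source group to move. Together with at least one moved edge variable, that is at least $k+1$ moved variables; to rule this out I would refine the count to $k+2$ by arguing that at least two edge variables must move. If this refinement fails, the plan is to show that any remaining symmetry fixes every clause containing $X$ variables. Once that is established, Theorem~\ref{thrm:id_clauses} with the substitution $\rho$ setting all auxiliary variables to $0$ gives $S_{\SRCII[(k+1)]}(\eta_n'^{(k)}) \ge S_\Res(T'_n)$. The diagonal-zeroing restriction, which preserves Resolution size just as in the depth-$d$ Frege lemma, then reduces this to $S_\Res(T_n)$.

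\textbf{Upper bounds~\eqref{eqn:upper1prime} and~\eqref{eqn:upper2prime}.} Resolve away all auxiliary variables using the unit clauses, in $\BigO(n^2 k)$ steps, obtaining $T'_n$. Then run the $\BigO(n^2)$ triangular analogue of the proof of Theorem~\ref{thrm:Tseitin3}, in which each symmetry step flips the edges of a short cycle: a triangle in the global version, a two-edge path in the local version. The justification of a symmetry step consists of original clauses, so the symmetry must also permute the auxiliary variables of every group whose clauses it moves.

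For a vertex incident to $f$ flipped edges, the sign pattern on the three chosen edges changes in $\min(f,3)$ coordinates. The symmetry therefore permutes the groups of that vertex nontrivially, moving all $8k$ of its auxiliary variables, except when the flipped edges avoid the chosen three.

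\textbf{Counting.} In the global case the triangle touches $3$ vertices, which gives $3\cdot 8k + 3 = 24k+3$. In the local case only two vertices are used, giving $2\cdot 8k + 3 = 16k+3$, with the third flipped edge lying outside the used clauses as in Theorem~\ref{thrm:eta}.

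The main obstacle is the lower bound's case analysis. It must exclude symmetries that move auxiliary variables only partially, and it must check that mixing groups within a single vertex cannot be done cheaply. That requires the careful count of moved variables above, not just the girth argument.
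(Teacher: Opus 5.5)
Your upper bounds follow the paper's proof. You strip the auxiliary variables, rerun the $T'_n$ refutation, and pay $8k$ moved auxiliary variables for each vertex whose groups get permuted. That gives $3\cdot 8k+3$ globally and $2\cdot 8k+3$ locally. The local refutation uses both two-edge paths and triangles with two used vertices, and both fit this budget. The lower bound, however, rests on a claim that is false.

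You assert that a symmetry moving an edge literal must send some main clause into a different group. This is not so. A group fixes only the signs of the three chosen rectangular edges at $v$. Its four clauses run through the four fixed-parity sign patterns of the other three edges at $v$: the unchosen rectangular edge $r(v)$ and both diagonals. Flipping any two of these three edges therefore permutes the clauses inside every group of $v$ and leaves the shared auxiliary variables untouched. This is a local $2$-symmetry for any justification that avoids the clauses at the far endpoints of the two flipped edges. It moves edge variables and does not fix individual clauses. The paper says this explicitly: ``it even has $2$-symmetries that can negate some pairs of two neighboring edges.''

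Consequently:
\begin{itemize}
\item neither your refinement to $k+2$ moved variables nor your fallback can succeed;
\item the hypothesis of Theorem~\ref{thrm:id_clauses} fails when $X$ is the set of all edge variables;
\item your intermediate bound $S_{\Res}(T'_n)$ is unsupported.
\end{itemize}

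The missing idea is to zero the diagonals \emph{before} analysing symmetries, not afterwards. Setting every diagonal to $0$ leaves exactly one clause in each group, namely the one with both diagonal literals positive. So $\eta_n'^{(k)}$ collapses to $\eta_n^{(k)}$, which has no $(k+1)$-symmetries acting nontrivially on rectangular edges. The paper then invokes Theorem~\ref{thrm:id_clauses} with the diagonals placed in $Y$. This is why the statement is phrased with $T_n$ rather than $T'_n$.

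Be aware that the paper only sketches this step. The flips pairing $r(v)$ with a diagonal still move a rectangular edge and do not commute with the restriction. A complete proof must also show how such symmetry steps are simulated once the diagonals are set to $0$.
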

\begin{proof}
    Proofs of the upper bounds are similar to the proofs of the upper bounds in Theorem~\ref{thrm:eta}.
    The only difference is that the length of a cycle in a new graph is $3$ instead of $4$ and the analogue of a U-shaped path is either a path of length $2$ or a path of length $3$ which is also a cycle.

    The proof of the lower bound is more interesting.
    Formula $\eta_n'^{(k)}$ does have nontrivial $(k + 1)$-symmetries.
    In fact, it even has $2$-symmetries that can negate some pairs of two neighboring edges.
    However, if we set all diagonal edges to $0$ we obtain the formula $\eta_n^{(k)}$ that does not have $(k + 1)$-symmetries that operate nontrivially on the edges of the rectangular grid.
    Therefore, we still can apply Theorem~\ref{thrm:id_clauses} and obtain the lower bound of $S_{\Res}(T_n)$.
\end{proof}

\input{static/figs/triangular-fig}

Let $\xi_n^{(a, b, c, k)}$ be the formula constructed similarly to $\xi_n^{(a, b, k)}$, but with formula $T'_n$ used as a base formula, where $c$ is the number of parts each diagonal edge is split into and all other parameters having the same meaning as before.

\begin{theorem}\label{thrm:xi_prime}
    For every $k \geq 3$, there exists a family of formulas $\Phi_n$ such that the size of the shortest $\SRCI[k]$-proof of $\Phi_n$ is polynomial, but the size of the shortest $\SRCI[(k - 1)]$-proof is exponential.
\end{theorem}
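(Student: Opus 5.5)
The plan is to choose the split parameters $a,b,c$ in $\xi_n^{(a,b,c,k')}$ so that the shortest cycles of the subdivided graph have length exactly $k$. Then the argument of Theorems~\ref{thrm:lower_heirarchy} and~\ref{thrm:upper_heirarchy} applies with the triangular grid in place of the rectangular one. In the subdivided triangular grid, every cycle is the subdivision of a cycle in the base graph. The shortest base cycles are the triangles, made of one horizontal, one vertical and one diagonal edge, which become cycles of length $a+b+c$. The base $4$-cycles (squares, and the other $4$-cycles formed from two diagonals) become cycles of length at least $2\min(a+b, a+c, b+c)$, and longer cycles are longer still. Taking $a,b,c\ge 1$ with $a+b+c=k$ and $a,b,c$ nearly equal makes the girth equal to $k$ and realised only by subdivided triangles. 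This works for every $k\ge 3$, not only even ones, which is the point of the triangular grid. The auxiliary parameter is set to $k'=k$ (or any value $\ge k$) so that big vertices cannot be moved.

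The lower bound goes first. I would repeat the lemma preceding Theorem~\ref{thrm:lower_heirarchy} for the triangular base. An edge at a big vertex cannot be mapped to an edge at a different big vertex, since that would move all $k'\ge k$ auxiliary variables. An edge at a big vertex cannot be mapped to another edge at the same vertex, since the whole subdivided path would then have to be moved. Hence every global $(k-1)$-symmetry acts as a sign flip on each non-auxiliary variable. Since the formula is preserved, the set of flipped edges lies in the cycle space of the subdivided graph, and a nonzero element has at least girth $=k$ edges, so the symmetry is trivial on non-auxiliary variables. Then, exactly as in Theorem~\ref{thrm:lower_heirarchy}, I would identify each path with one marked edge, delete the auxiliary variables and trivial clauses, and obtain a Resolution refutation of $T'_n$. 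This gives $S_{\SRCI[(k-1)]}(\Phi_n)\ge S_{\Res}(T'_n)\ge S_{\Res}(T_n)$ via the diagonals-to-zero restriction. That bound is exponential by the known Resolution lower bounds for Tseitin formulas on the toroidal grid, or by Theorem~\ref{thrm:haastad}.

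For the upper bound, I would first resolve away the auxiliary unit clauses and collapse each subdivided path back to a single edge, using resolution on the small-vertex parity clauses. This derives $T'_n$ in $\BigO(n^2(a+b+c+k))$ steps. I would then run the proof of the Remark, i.e.\ the triangular analogue of Corollary~\ref{cor:Rectangular4}, which uses global $3$-symmetries flipping the edges of a triangle. Each such symmetry lifts to a global $k$-symmetry of $\Phi_n$ that flips all $a+b+c=k$ variables of the subdivided triangle, and it preserves every clause set, small-vertex constraints included. Its justification is the whole formula, and only the original-edge variables appear in the resulting clauses, as in Theorem~\ref{thrm:upper_heirarchy}. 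This gives a polynomial bound.

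The main obstacle is the girth and cycle-space check. I must confirm that no non-triangle element of the cycle space is shorter than $k$ under the chosen $a,b,c$, in particular the $4$-cycles using two diagonals or two parallel edges, whose length $2\min$ of pairwise sums is at least about $\tfrac{4}{3}k\ge k$. I must also confirm that the big-vertex rigidity argument still works at degree $6$. For $k=3$ everything collapses to $a=b=c=1$, and the check reduces to the girth $3$ of $T'_n$ itself.
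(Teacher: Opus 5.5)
Your proposal is correct and follows the paper's proof. The paper also takes $\xi_n^{(a,b,c,a+b+c)}$ with $a+b+c$ equal to the target symmetry size and $a,b,c$ satisfying the triangle inequality, which your nearly-equal choice guarantees. That inequality is exactly what keeps the two-type $4$-cycles (of lengths $2(a+b)$, $2(a+c)$, $2(b+c)$) from being shorter than the triangles. The lower bound then comes from the rigidity/girth argument of Theorem~\ref{thrm:lower_heirarchy}, and the upper bound from lifting triangle flips as in Theorem~\ref{thrm:upper_heirarchy}.

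One minor slip: for $k=4$ the split $(1,1,2)$ also produces $4$-cycles of length exactly $k$, so the girth is not realised only by triangles. Since only girth $\ge k$ is needed, nothing breaks.
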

\begin{proof}
    The formula separating $\SRCI[k]$ and $\SRCI[(k + 1)]$ is $\xi_n^{(a, b, c, a + b + c)}$, where $a$, $b$ and $c$ are chosen in such a way that they satisfy the triangle inequality and $a + b + c = k + 1$.
    The lower bound is obtained similarly to Theorem~\ref{thrm:lower_heirarchy} with the only difference that the girth of the new graph is $a + b + c$.

    \[
        S_{\SRCI[(a + b + c - 1)]}(\xi_n^{(a, b, c, a + b + c)}) \geq S_{\Res}(T'_n).
    \]

    And the upper bound is similar to Theorem~\ref{thrm:upper_heirarchy}.

    \[
        S_{\SRCI[(a + b + c)]}(\xi_n^{(a, b, c, a + b + c)}) = \BigO(n(a + b + c) + n^2).
    \]

    Combining them together we obtain the desired separation.
\end{proof}

\section{Separating Small Symmetries from Large Symmetries}\label{sec:types}

There are three possible metrics on which one system with small symmetries can have an advantage over another.
Each of these metrics leads to a different type of separation:

\begin{enumerate}
    \item Increasing the size of the symmetries. An example of such statement is: for appropriate values of $k$, $\SRI[k + 1]$ has polynomial proofs for formulas hard for $\SRCII[k]$.
    \item Generalizing the type of the symmetries. For example, $\SRII[6]$ has polynomial proofs for formulas hard for $\SRCI$ (Theorem~\ref{thrm:hierarchy_small}).
    \item Adding complementary symmetries. For example, $\SRCI[164]$ has polynomial proofs for formulas hard for $\SRII$ (Theorem~\ref{thrm:last2}).
\end{enumerate}

In previous sections, we focused on the separations of the first type.
In this section, we consider the other two types, where the system with small symmetries has an advantage over the system with large symmetries.

We start with the separations of small symmetries of more general type from large symmetries of less general type.
In order to obtain them, we need a family of formulas $T'_n$ from the previous section and the following result proven by Szeider.

\begin{theorem}[\cite{Szeider05}]
    \label{thrm:break_all}
    For any formula $\Phi_n$ there exists a formula $\Psi_n$ satisfying the following conditions:

    \begin{itemize}
        \item $\Psi_n$ is polynomially constructable from $\Phi_n$.
        \item $\Phi_n$ could be inferred from $\Psi_n$ in a polynomial number of Resolution steps.
        \item $S_{\SRCII}(\Psi_n) \geq S_{\Res}(\Phi_n).$
    \end{itemize}
\end{theorem}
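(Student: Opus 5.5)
The plan is to make every clause of $\Psi_n$ rigid: any symmetry that maps an input clause to an input clause must fix that clause setwise. Once this holds, Theorem~\ref{thrm:id_clauses} (which allows arbitrary $k$, so also $k=n$, i.e.\ $\SRCII$) turns the symmetry rule into a no-op. The cheapest invariant that every symmetry preserves, even a complementary one, is the \emph{length} of a clause. So I will pad the clauses of $\Phi_n$ so that all clauses of $\Psi_n$ have pairwise distinct lengths.

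\textbf{Construction.} Let $\Phi_n = C_1 \wedge \cdots \wedge C_m$ and let $L = \max_i |C_i|$. Introduce fresh variables $z_1, \ldots, z_N$ with $N = (m+1)(L+1)$. The formula $\Psi_n$ consists of two kinds of clauses.
\begin{itemize}
\item Padding clauses $D_j = \lnot z_j \lor z_{j+1} \lor \cdots \lor z_N$ for $1 \le j \le N$. These have the $N$ distinct lengths $1, \ldots, N$.
\item Padded copies $C_i' = C_i \lor z_1 \lor \cdots \lor z_{\ell_i}$ for $1 \le i \le m$. The $\ell_i$ are chosen so that the lengths $|C_i| + \ell_i$ are pairwise distinct and all exceed $N$; for example, take $|C_i| + \ell_i = N + i$, which is possible because $N - L \geq 0$ leaves room.
\end{itemize}
This is clearly polynomial.

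\textbf{Recovering $\Phi_n$.} Resolving $D_N = \lnot z_N$ into $D_{N-1}$, then the result into $D_{N-2}$, and so on, derives every unit clause $\lnot z_j$ in $N$ steps. Resolving these units against each $C_i'$ strips the padding and yields $C_i$. In total this takes $O(N \cdot m + N)$ steps, which is polynomial.

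\textbf{Lower bound.} Take any symmetry $\phi$ used in a local symmetry step of an $\SRCII$-proof of $\Psi_n$. For each justification clause $B$, both $B$ and $\phi(B)$ are input clauses. Any symmetry, positive or complementary, preserves clause length, and clause lengths in $\Psi_n$ are unique, so $\phi(B) = B$ as sets. Hence the hypothesis of Theorem~\ref{thrm:id_clauses} holds with $X$ the set of all variables and $Y = \emptyset$. Its argument then shows that the proof unwinds into a Resolution proof of no larger size, so $S_{\SRCII}(\Psi_n) \ge S_{\Res}(\Psi_n)$. Finally, apply the restriction $z_j := 0$ for all $j$. Every $D_j$ becomes satisfied, and every $C_i'$ becomes $C_i$. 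Since restrictions do not increase Resolution proof size, $S_{\Res}(\Psi_n) \ge S_{\Res}(\Phi_n)$.

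\textbf{Main obstacle.} The delicate point is invoking Theorem~\ref{thrm:id_clauses} in the local setting. There, $\phi$ is required to fix only the justification clauses, and arbitrarily many symmetries are used. I need to check that the inductive invariant $C\rvert_X = \phi(C)\rvert_X$ from that proof still goes through when $\phi$ is only known to fix $B_1, \ldots, B_l$ setwise, not literalwise. Within a single clause, $\phi$ could still permute or complement literals in a way that maps the clause to itself.

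If this is a problem, I would strengthen the construction so that each clause is also fixed literalwise. One option is to give each clause its own private chain of padding variables of distinct lengths, so that no nontrivial permutation of a clause's literals extends to a symmetry of the justification. The other option is to argue directly that a setwise-fixed justification makes the symmetry step derive a clause whose proof can be replayed verbatim. The second option may suffice, because the derived clause $\phi(A)$ is then obtained from the same premises $B_i$.
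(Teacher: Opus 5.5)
\textbf{Gap: the padded formula $\Psi_n$ cannot be built as described.} The strategy itself is right: a length invariant makes the justification clauses rigid, Theorem~\ref{thrm:id_clauses} then neutralizes the symmetry rule, and a restriction finishes the argument. The problem is the construction. Each padded copy $C_i'$ is padded only with $z_1,\dots,z_N$, so $\ell_i\le N$. But $|C_i|+\ell_i=N+i$ forces $\ell_i=N+i-|C_i|>N$ whenever $i>|C_i|$. The inequality $N-L\ge 0$ you cite is not the relevant one.

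Enlarging $N$ does not help. The chain $D_1,\dots,D_N$ already uses every length $1,\dots,N$, and every $C_i'$ has length at most $N+L$. So at most $L$ of the $m$ padded copies can get a length not already taken. The same count applies to any gadget in which the $j$-th padding variable is killed by a clause whose other literals are over already-killed variables: if those clauses are to have distinct lengths, the lengths are forced to be exactly $1,\dots,N$. Hence your goal that all clauses of $\Psi_n$ have pairwise distinct lengths is unattainable this way once $m>L$, which is the typical case.

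\textbf{Repair.} Drop the requirement that the padding-killing clauses be rigid. Theorem~\ref{thrm:id_clauses} only needs the clauses containing original variables to be fixed.
\begin{itemize}
\item Pad each $C_i$ with its own fresh variables $z_{i,1},\dots,z_{i,p_i}$ so that the lengths $|C_i'|$ are pairwise distinct and at least $2$.
\item Add the unit clauses $\lnot z_{i,t}$.
\end{itemize}
In a local symmetry step, $B$ and $\phi(B)$ are input clauses of equal length. So each $C_i'$ is fixed setwise, and units go to units. Now apply Theorem~\ref{thrm:id_clauses} with $X=\mathrm{vars}(\Phi_n)$, $Y$ the padding variables, and $\rho\equiv 0$. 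This gives $S_{\SRCII}(\Psi_n)\ge S_{\Res}(\Phi_n)$ directly, and $\Phi_n$ is recovered by resolving away the units.

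This is exactly the device the paper uses for $\Psi_n$ in Theorem~\ref{thrm:hierarchy_small} and for $\tau_n,\tau'_n$. The paper itself only cites Szeider, whose result covers local homomorphisms ($\HRII$). There clause length is not invariant, so your length-based specialization to $\SRCII$ is legitimately simpler.

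\textbf{Your stated main obstacle is not an obstacle.} The invariant in the proof of Theorem~\ref{thrm:id_clauses} is the setwise equality $C\rvert_X=\phi(C)\rvert_X$. Its base case needs precisely setwise fixing of the justification clauses, so no literalwise rigidity is required. With $X$ equal to all variables, the invariant even yields $\phi(A)=A$. Each symmetry step therefore re-derives an existing clause and can be deleted. That is why the rule becomes useless, not because the derivation can be replayed (replaying would cost steps).
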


\begin{theorem}\label{thrm:hierarchy_small}
    There exist families of formulas $\Phi_n$, $\Psi_n$, $\Phi_n'$, $\Psi_n'$ such that:
    \begin{itemize}
        \item $S_{\SRCIII[3]}(\Phi_n)$ is polynomial, but $S_{\SRCII}(\Phi_n)$ is exponential.
        \item $S_{\SRCII[3]}(\Psi_n)$ is polynomial, but $S_{\SRCI}(\Psi_n)$ is exponential.
        \item $S_{\SRIII[6]}(\Phi'_n)$ is polynomial, but $S_{\SRCII}(\Phi'_n)$ is exponential.
        \item $S_{\SRII[6]}(\Psi'_n)$ is polynomial, but $S_{\SRCI}(\Psi'_n)$ is exponential.
    \end{itemize}
\end{theorem}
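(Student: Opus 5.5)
The plan is to reuse the polynomial-size refutations of the triangular Tseitin formula $T'_n$, which need only $3$-symmetries (the Remark before Theorem~\ref{thrm:eta_prime}). We then wrap $T'_n$ with Szeider's symmetry-breaking transformation (Theorem~\ref{thrm:break_all}) so that large static symmetries become useless. The lower bounds will always reduce to $S_{\Res}(T'_n)$. This is exponential because a Resolution refutation is in particular a bounded-depth Frege refutation, and the lemma before Theorem~\ref{thrm:eta_prime} together with Theorem~\ref{thrm:haastad} gives $S_{\Res}(T'_n)\ge S_{\text{depth-$d$ Frege}}(T_n)$.

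\emph{First item.} Let $\Phi_n$ be the formula $\Psi$ produced by Theorem~\ref{thrm:break_all} from $T'_n$. Then $S_{\SRCII}(\Phi_n)\ge S_{\Res}(T'_n)$ is exponential. For the upper bound, we first derive every clause of $T'_n$ from $\Phi_n$ in polynomially many Resolution steps. From that point on we replay the $\SRCI[3]$-refutation of $T'_n$. Every symmetry it uses is a symmetry of the derived copy of $T'_n$, and dynamic symmetries may be applied to derived clauses. Hence each symmetry step is a legal $\SRCIII[3]$ step with the derived $T'_n$-clauses as justification premises.

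\emph{Second item.} We need the clauses of $T'_n$ to be genuinely present in the input, since local symmetries act only on input clauses, while still killing every global symmetry. The proposal is $\Psi_n = T'_n \wedge \Phi_n$ on shared edge variables, with $\Phi_n$ as in the first item. The upper bound is the $\SRCII[3]$-refutation of Theorem~\ref{thrm:Tseitin3}, carried out on the triangular grid, which uses only the $T'_n$ part. For the lower bound, any global symmetry of $\Psi_n$ must preserve the whole clause set, in particular the rigid Szeider part. I would argue from the construction in~\cite{Szeider05} that such a symmetry is the identity. Then the symmetry rule contributes nothing, and $\SRCI$ collapses to Resolution on $\Psi_n$. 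A suitable restriction of the auxiliary Szeider variables satisfies the extra clauses and reduces $\Psi_n$ to $T'_n$, so we get the bound $S_{\Res}(T'_n)$, in the spirit of Theorem~\ref{thrm:id_clauses}.

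\emph{Items three and four.} Here complementation must be simulated by positive symmetries, at a factor $2$ cost in size. Each edge $e$ gets two variables $x_e, x'_e$ together with clauses $(x_e\lor x'_e)$ and $(\lnot x_e\lor\lnot x'_e)$. In the vertex constraints every negative literal $\lnot x_e$ is replaced by $x'_e$. Swapping $x_e \leftrightarrow x'_e$ is positive and acts like negating $x_e$. So a complementary $3$-symmetry of $T'_n$ becomes a positive $6$-symmetry of the new formula $T''_n$, and the added clauses are mapped to themselves. Setting $x'_e=\lnot x_e$ recovers $T'_n$, which preserves the lower bounds. We then apply the constructions of the first two items to $T''_n$ instead of $T'_n$ to obtain $\Phi'_n$ and $\Psi'_n$.

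The main obstacle is the second and fourth items. We must verify that gluing Szeider's rigid formula onto the same variables really leaves no global (complementary) symmetry and does not make the formula easy for Resolution. If sharing variables causes trouble, I would first try adding a fresh gadget variable to a single clause, or a unique weakening per vertex, to pin down each edge variable.
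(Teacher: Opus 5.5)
Your first and third items match the paper's argument. That is, Szeider's transformation of $T'_n$, re-deriving $T'_n$ and then applying dynamic $3$-symmetries, and the doubling $x_e \mapsto (x_e, x'_e)$ that turns complementary $k$-symmetries into positive $2k$-symmetries.

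\textbf{The gap is in the second item, and therefore also the fourth.} Your lower bound for $T'_n \wedge \Phi_n$ rests entirely on the claim that every global symmetry of this formula is the identity. You only hope to extract this from the details of Szeider's construction, but Theorem~\ref{thrm:break_all} gives nothing of the sort. A bound on $S_{\SRCII}$ of the transformed formula does not say that it has no symmetries. Even if the Szeider part alone were rigid, you would still have to exclude global symmetries of the union that exchange clauses of $T'_n$ with Szeider clauses. So the crux of the lower bound is unproven.

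Your final step is also impossible as worded. The Szeider part is unsatisfiable, since it derives $T'_n$. Hence no restriction of its auxiliary variables alone can satisfy all of its clauses. This step is easy to repair: since $T'_n$ is derivable from $\Phi_n$ in polynomially many steps, $S_{\Res}(T'_n \wedge \Phi_n) \geq S_{\Res}(\Phi_n) - \mathrm{poly}(n)$, which is exponential. The rigidity claim, however, cannot be repaired this way.

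The paper avoids Szeider's construction for $\Psi_n$ altogether. For every edge variable $x_e$ it adds one new clause $D_e$ consisting of $x_e$ together with fresh variables that occur nowhere else. The lengths of the $D_e$ are pairwise distinct and distinct from the clause lengths of $T'_n$. Any global symmetry $\phi$, even a complementary one, must satisfy $\phi(D_e) = D_e$, so $\phi(x_e)$ is a literal of $D_e$. It cannot be a fresh literal, because a fresh variable occurs in one clause while $x_e$ occurs in many. It cannot be $\lnot x_e$ either, since $D_e$ contains $x_e$ and not $\lnot x_e$. Hence $\phi(x_e) = x_e$, and every clause containing an edge variable is fixed.

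Theorem~\ref{thrm:id_clauses} then applies, with $Y$ the fresh variables set so as to satisfy the dummy clauses, and gives $S_{\SRCI}(\Psi_n) \geq S_{\Res}(T'_n)$. All local symmetries of $T'_n$ survive, because its clauses are still in the input.

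Your fallback of a unique weakening per vertex does not work. It is exactly the construction of Lemma~\ref{lemma:last}, which kills only positive global symmetries. The complementary flips along triangles remain global $3$-symmetries, so it cannot give hardness for $\SRCI$.
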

\begin{proof}
    $\Phi_n$ could be built as $T'_n$ after transformation described in Theorem~\ref{thrm:break_all}.
    We build $\Psi_n$ from $T'_n$ by adding, for each variable of the original formula, a dummy clause whose length is unique among all clauses (including those of $T'_n$), and which contains exactly that variable together with a positive number of new variables.
    The global symmetries of the constructed formula map all original variables trivially into themselves, and the dummy clauses cannot participate in the proof because each of them contain a variable that appear in the formula only with one sign, so all $\SRCI$-proofs of $\Psi_n$ are essentially $\Res$-proofs of $T'_n$.
    On the other hand, all local symmetries are unaffected by the transformation.
    
    The remaining formulas $\Phi'_n$ and $\Psi'_n$ are built the same way as $\Phi_n$ and $\Psi_n$, but with additional transformation in the middle that makes negations of variables new variables and adds constraints ensuring that a variable is never equal to the new variable representing its negation.
    This transformation makes each $k$-symmetry a positive $2k$-symmetry.
\end{proof}

Combining this theorem with Theorem~\ref{thrm:eta_prime} and Corollary~\ref{cor:bphp_hierarchy}, we can obtain all separations shown in Figure~\ref{fig:diagram_comb}, except for the half of the top left edge and the separation between $\SRII$ and $\SRCI[164]$.
\begin{figure}
    \centering

    \begin{tikzpicture}[auto]
        \node[draw] (res) [draw = none] {$\Res$};
        \node[draw] (rc1) [draw = none, right = 2cm of res] {$\SRCI$};
        \node[draw] (rc1s) [draw = none, above = of rc1] {$\SRCI[3]$};
        \node[draw] (r1s) [draw = none, below = of rc1] {$\SRI[6]$};
        \node[draw] (rc2) [draw = none, right = 2cm of rc1] {$\SRCII$};
        \node[draw] (rc2s) [draw = none, above = of rc2] {$\SRCII[3]$};
        \node[draw] (r2s) [draw = none, below = of rc2] {$\SRII[6]$};
        \node[draw] (rc3) [draw = none, right = 2cm of rc2] {$\SRCIII$};
        \node[draw] (rc3s) [draw = none, above = of rc3] {$\SRCIII[3]$};
        \node[draw] (r3s) [draw = none, below = of rc3] {$\SRIII[6]$};
        \node[draw] (r1) [draw = none, above = of res] {$\SRI$};

        \node[draw] (rc1m) [draw = none, right = 1cm of rc3] {$\SRCI[164]$};
        \node[draw] (r2) [draw = none, above = of rc1m] {$\SRII$};

        \path[|->]
        (rc1s) edge [thick] (res)
        (rc1) edge [thick] (res)
        (r1s) edge [thick] (res)

        (rc1) edge [thick] (rc1s)
        (rc1) edge [thick] (r1s)

        (rc2) edge [thick] (rc2s)
        (rc2) edge [thick] (r2s)

        (rc2s) edge [thick] (rc1s)
        (rc3s) edge [thick] (rc2s)

        (r2s) edge [thick] (r1s)
        (r3s) edge [thick] (r2s)

        (rc2) edge [thick] (rc1)
        (rc3) edge [thick] (rc2)

        ;

        \path[->]
        (rc3) edge [thick] (rc3s)
        (rc3) edge [thick] (r3s)
        ;

        \path[|-|]
        (rc2s) edge [thick, sloped, "Thm~\ref{thrm:hierarchy_small}", "Thm~\ref{thrm:eta_prime}" swap] (rc1)
        (r2s) edge [thick] (rc1)
        (r1) edge [thick, "Lem~\ref{lemma:last}", "Cor~\ref{cor:bphp_hierarchy}" swap] (rc1s)
        ;

        \path[|-]
        (rc3s) edge [thick, sloped, "Thm~\ref{thrm:hierarchy_small}"] (rc2)
        (r3s) edge [thick] (rc2)
        ;

        \path[|-|]
        (r2) edge [thick, "Thm~\ref{thrm:last2}"] (rc1m)
        ;
    \end{tikzpicture}

    \caption{Relation between proof systems with the smallest symmetry constraints.}

    \label{fig:diagram_comb}
\end{figure}%
We show the remaining separations by generalizing the idea of the $\SRCII$ lower bounds of Arai and Urquhart~\cite{AU00}.

\begin{lemma}\label{lemma:last}
    There exists a family of formulas $\tau_n$ such that the size of the shortest $\SRCI[3]$-proof of $\tau_n$ is polynomial, but the size of the shortest $\SRI$-proof is exponential.
\end{lemma}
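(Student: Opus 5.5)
We need a family $\tau_n$ that is easy for $\SRCI[3]$ but hard for $\SRI$ with arbitrary positive global symmetries. The natural base is $T'_n$, which by the Remark in Section~\ref{sec:triangular} has $\SRCI[3]$-refutations of size $\BigO(n^2)$. These refutations use only complementary symmetries: each flips the three edges of a triangle and permutes nothing. The plan is to modify $T'_n$ so that every nontrivial positive global symmetry is destroyed, while every one of these flipping symmetries survives. Following Arai and Urquhart~\cite{AU00}, we rigidify the formula against permutations with a construction that is itself invariant under negation.

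The construction runs as follows.

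\textbf{Labelling the variables.} Enumerate the variables $x_1,\dots,x_N$ of $T'_n$. For each $x_i$, add a gadget that is invariant under flipping $x_i$ but distinguishes $x_i$ from every other variable. A concrete choice is a pair of clauses $(x_i \lor y_{i,1} \lor \dots \lor y_{i,m_i})$ and $(\lnot x_i \lor y_{i,1}\lor\dots\lor y_{i,m_i})$ with fresh variables $y_{i,j}$ and pairwise distinct lengths $m_i+1$, all different from the clause lengths of $T'_n$.

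\textbf{Keeping the formula hard to exploit.} Add unit clauses $\lnot y_{i,j}$. The formula stays unsatisfiable, and the gadget clauses can only resolve back to tautological or subsumed information about $x_i$. Alternatively, one could make each $y_{i,j}$ occur with only one sign so that the gadgets cannot help a proof. Then flipping any triangle of $T'_n$ maps each gadget pair onto itself, so the $\SRCI[3]$ upper bound of the Remark survives. The unit clauses cost only a linear number of resolution steps to strip, leaving $T'_n$.

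\textbf{Lower bound.} I would show that every positive global symmetry $\phi$ of $\tau_n$ fixes each $x_i$. Since clause lengths are preserved and the gadget lengths are unique, $\phi$ maps each gadget pair to itself. As $\phi$ is positive, it cannot swap the two clauses of the pair, so it must fix $x_i$. It may only permute the $y_{i,j}$ among themselves, and those occur only in gadget clauses and units. Hence every positive global symmetry acts trivially on the clauses containing $T'_n$-variables. Theorem~\ref{thrm:id_clauses} then applies, with $X$ the original variables and $Y$ the $y$'s. Taking $\rho$ to set all $y_{i,j}$ to $1$, which satisfies the gadgets and deletes the units after weakening, gives $S_{\SRI}(\tau_n)\ge S_{\Res}(T'_n)\ge S_{\Res}(T_n)$. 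That bound is exponential because Tseitin formulas on grids are hard for Resolution, and it also follows from Theorem~\ref{thrm:haastad}.

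\textbf{Main obstacle.} The main difficulty is checking the hypothesis of Theorem~\ref{thrm:id_clauses}. That theorem asks that clauses containing $X$-variables be mapped to themselves, but the $y_{i,j}$ are permuted, and the theorem only needs $C|_X=\phi(C)|_X$. I expect to reprove the theorem's inductive invariant in this relaxed form, or simply to take $m_i=1$ so that the gadget is fully rigid. I must also make sure the substitution $\rho$ kills the units consistently. If setting $y=1$ conflicts with the units $\lnot y$, I would instead drop the units: with $y$ occurring only positively, the formula stays unsatisfiable because $T'_n$ is, and $\rho$ setting all $y$ to $1$ yields exactly $T'_n$.
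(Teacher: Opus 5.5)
\textbf{The construction as you state it fails.} With the unit clauses $\lnot y_{i,j}$ present, $\tau_n$ is trivially refutable. Resolving $(x_i\lor y_{i,1}\lor\cdots\lor y_{i,m_i})$ against the $m_i$ units gives $(x_i)$. Doing the same with $(\lnot x_i\lor y_{i,1}\lor\cdots\lor y_{i,m_i})$ gives $(\lnot x_i)$, and one more step gives the empty clause. So this $\tau_n$ has an $\BigO(m_i)$-step Resolution refutation, and the $\SRI$ lower bound is simply false for it.

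Two of your claims are therefore wrong. The gadgets do not ``only resolve back to tautological or subsumed information about $x_i$''. Stripping the units does not leave $T'_n$; it leaves the contradictory units $(x_i)$ and $(\lnot x_i)$. The problem is also not the choice of $\rho$, as your last paragraph suggests. Setting $y\mapsto 0$ leaves the same contradictory units, and setting $y\mapsto 1$ falsifies a unit clause, so no substitution rescues the bound.

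The paper avoids this by attaching the padding to the constraint clauses themselves. All $32$ clauses of $v_j$ receive the same $j$ fresh variables, so unit propagation returns exactly $T'_n$. Rigidity then comes from each edge $v_jv_{j'}$ being identified by the lengths $6+j$ and $6+j'$ of the clauses containing it.

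\textbf{The unit-free variant you list as a fallback works, and you should commit to it.} Take pairwise distinct $m_i\ge 1$ with $m_i+1\neq 6$, and let the $y_{i,j}$ occur only positively.
\begin{itemize}
\item Upper bound. $T'_n$ is literally a subformula. Each triangle flip maps every gadget pair onto itself (swapping its two clauses when $x_i$ is flipped) and permutes the clauses of $T'_n$ as before. So it is still a global $3$-symmetry, and the $\SRCI[3]$ refutation from the Remark carries over verbatim.
\item Rigidity. A positive global symmetry preserves both clause length and the number of negated literals. Hence it fixes each gadget clause, and therefore each $x_i$.
\item Applying Theorem~\ref{thrm:id_clauses}. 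Every clause containing an original variable is mapped to itself \emph{as a set}, since the $y$'s are only permuted inside their gadget clause. This is exactly the theorem's hypothesis, so your ``main obstacle'' disappears and nothing needs reproving. Note that taking all $m_i=1$ would destroy the distinct lengths that your rigidity argument rests on.
\item Conclusion. The substitution $\rho\colon y\mapsto 1$ yields exactly $T'_n$.
\end{itemize}

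\textbf{Comparison with the paper.} Your corrected route keeps $T'_n$ intact and pins down each variable locally via the pair (length, number of negations). It is essentially the dummy-clause gadget of $\Psi_n$ in Theorem~\ref{thrm:hierarchy_small}, doubled with both signs so that flips survive, combined with the invariant used in Theorem~\ref{thrm:last2}. The paper instead pads the vertex constraints and reads off each edge from its two endpoints. Both routes end with Theorem~\ref{thrm:id_clauses} and the Resolution hardness of $T'_n$.
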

\begin{proof}
    We modify the formula $T'_n$.
    Enumerate all vertices in an arbitrary order and consider a vertex $v_j$.
    The constraint corresponding to $v_j$ is encoded using $32$ clauses of $6$ literals.
    We add $j$ additional variables $z_{j, 1}, y_{j, 2}, \ldots, y_{j, j}$ to all $32$ clauses and add $j$ unit-clauses falsifying them.
    We call the resulting formula $\tau_n$.

    The proof of $\tau_n$ in $\SRCI[3]$ could be constructed similarly to the proof of $T'_n$ in $\SRCI[3]$ because $\tau_n$ still contains all the symmetries flipping variables along the cycles of length $3$.
    On the other hand, the set of lengths of clauses is unique to each variable of $T'_n$, which means that all global positive symmetries work trivially on them.
    Combination of Theorem~\ref{thrm:id_clauses} with the Resolution lower bound for $T'_n$ concludes the proof.
\end{proof}

Although the constructed formula $\tau_n$ does not have any nontrivial global positive symmetries, it still has local symmetries. This can be avoided by using more complicated transformation, which we show in the next theorem.

\begin{theorem}\label{thrm:last2}
    There exists a family of formulas $\tau'_n$ such that the size of the shortest $\SRCI[164]$-proof of $\tau'_n$ is polynomial, but the size of the shortest $\SRII$-proof is exponential.
\end{theorem}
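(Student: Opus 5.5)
I will start from $T'_n$, whose $\SRCI[3]$-refutation flips the three edges of a triangle, and pad each vertex constraint so that two things hold. First, positive symmetries, even local ones, cannot map a clause of one vertex to a clause of another vertex or permute edge variables. Second, the complementary triangle flips remain \emph{global} symmetries of bounded size. Following Arai and Urquhart, I will make every vertex constraint locally rigid under positive maps. A complementation-invariant padding does this; a length-uniqueness trick as in Lemma~\ref{lemma:last} does not, since local symmetries may map a clause to a different clause of the same length.

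\textbf{Construction.} For each edge variable $x_e$ of $T'_n$ I introduce a constant-size gadget of fresh variables. The gadget is chosen so that the set of gadget clauses is closed under complementing $x_e$ together with a fixed constant set of gadget variables. It also breaks every nontrivial positive permutation among the six edge literals occurring in a vertex constraint, for example by attaching to $x_e$ a small asymmetric pattern whose shape depends on the \emph{direction} of $e$ (horizontal, vertical, diagonal) and on the endpoint. The gadget variables are pinned by unit clauses, so $T'_n$ is derivable from $\tau'_n$ in polynomially many Resolution steps.

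\textbf{Upper bound.} I propagate the units, or keep the gadgets and derive $T'_n$'s clauses, and then replay the $\SRCI[3]$ proof from the Remark after the definition of $T'_n$. A triangle flip in $T'_n$ lifts to a global complementary symmetry of $\tau'_n$: it flips the three edges and the correspondingly complemented gadget variables of those edges. It must also respect all clauses of the six incident vertex constraints. Counting the gadget variables touched per edge and per vertex, I will tune the gadget so that the total size is at most $164$. I expect the number $164$ to come out as $3$ edges plus the gadget variables affected at the three triangle vertices and their neighbouring constraints.

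\textbf{Lower bound.} I aim to show that every local positive symmetry $\phi$ used in an $\SRII$-proof acts as the identity on clauses containing edge variables, restricted to edge variables. Then Theorem~\ref{thrm:id_clauses} with $X$ the edge variables and $\rho$ setting gadget variables to false gives $S_{\SRII}(\tau'_n)\ge S_{\Res}(T'_n)$, which is exponential. The argument is a case analysis. A positive $\phi$ maps a set of input clauses into input clauses, so it preserves signs. In a vertex constraint, the only sign-preserving maps of the parity clauses are permutations of edge positions, since complementing is forbidden. The direction-dependent gadget then rules out every nontrivial permutation. Because gadget clauses are distinct per edge, mapping between different vertices is blocked as well.

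The main obstacle is designing the gadget so that both requirements hold at once, rigidity against all positive local maps and invariance under the complementary triangle flip. A local symmetry may use only part of a constraint, so the rigidity must hold clause by clause, not just for whole constraints. As a first attempt I would use asymmetric rooted trees of distinct constant sizes for each edge direction, with clauses of distinct lengths, and check that they are closed under simultaneous complementation.
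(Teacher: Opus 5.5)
There is a genuine gap, and it sits exactly where you place the ``main obstacle'': nothing in your construction gives clause-by-clause rigidity against local positive symmetries, and the lower-bound case analysis is wrong as stated.

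In $\SRII$ a symmetry only has to map the justification clauses into input clauses. Suppose your gadgets are separate clauses, as ``gadget clauses are distinct per edge'' suggests. Then a justification made only of parity clauses never sees them. Permuting the six edge variables of one vertex of $T'_n$, or mapping parity clauses of one vertex onto those of another, is a legal positive local symmetry. So the claims that the gadget ``rules out every nontrivial permutation'' and that ``mapping between different vertices is blocked'' fail, and Theorem~\ref{thrm:id_clauses} cannot be applied.

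Suppose instead you put gadget literals into the parity clauses. Then there is no ``attachment'' of a pattern to $x_e$ inside a clause, because a clause is just a set of literals. For a small justification, a positive map essentially only has to preserve each clause's length and its number of negated literals. Tree shapes and edge directions are invisible at that level, and your proposal never produces an invariant of this kind. The constant $164$ is likewise only ``expected'', not derived.

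The paper's idea is to use exactly those two invariants. Positive symmetries preserve the length and the number of negated literals of every clause, while complementary symmetries can change the latter. The construction starts from the rectangular $T_n$ rather than $T'_n$. It replaces the $8$ clauses of the $j$-th vertex by padded clauses:
\begin{itemize}
\item each clause gets $5$ fresh variables $y_{i,1},\ldots,y_{i,5}$ private to it, with signs chosen so that the eight clauses have pairwise distinct numbers of negations;
\item all eight clauses share variables $z_1,\ldots,z_j$, which make the length $j+9$ unique to the vertex;
\item unit clauses falsify all padding literals.
\end{itemize}
Then any positive symmetry, local or global, maps every clause containing $x$-variables to itself, and Theorem~\ref{thrm:id_clauses} yields $S_{\SRII}(\tau'_n)\ge S_{\Res}(T_n)$.

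For the upper bound, flipping two edge variables at a vertex permutes its eight clauses without fixed points. Since the $y$'s are private to clauses, one maps $y_{i,\cdot}$ to $y_{\sigma(i),\cdot}$ with whatever sign changes are needed, so the extension is necessarily complementary, and fixes the $z$'s. A $4$-cycle flip thus moves $4$ edge variables and $4\cdot 40$ $y$'s, which is where $164$ comes from.

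Starting from $T'_n$ works against you here. Each vertex has $32$ clauses whose edge parts carry only $0,2,4,6$ negations, so separating them by negation count needs much more padding per vertex. There is no reason to expect a lifted triangle flip to stay within $164$.
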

\begin{proof}
    Unlike the previous lemma, we base this construction on the family of formulas $T_n$ built on the rectangular (instead of the triangular) toroidal grid.
    Again, as the first step, we enumerate all vertices and edges incident to each vertex in an arbitrary order.
    Let $x_1$, $x_2$, $x_3$, and $x_4$ be the variables from $8$ clauses corresponding to the $j$-th vertex.
    We replace these eight clauses with eight new clauses of $j + 9$ variables each:
    \begin{align*}
    (\phantom{\lnot} x_{1} \lor \phantom{\lnot} x_{2} \lor \phantom{\lnot} x_{3} \lor \phantom{\lnot} x_{4} &&\lor&& \phantom{\lnot} y_{1, 1} \lor \phantom{\lnot} y_{1, 2} \lor \phantom{\lnot} y_{1, 3} \lor \phantom{\lnot} y_{1, 4} \lor \phantom{\lnot} y_{1, 5} &&\lor&& z_1 \lor \ldots \lor z_j) \\
    (\lnot x_{1} \lor \lnot x_{2} \lor \phantom{\lnot} x_{3} \lor \phantom{\lnot} x_{4} &&\lor&& y_{2, 1} \lor \phantom{\lnot} y_{2, 2} \lor \phantom{\lnot} y_{2, 3} \lor \phantom{\lnot} y_{2, 4} \lor \phantom{\lnot} y_{2, 5} &&\lor&& z_1 \lor \ldots \lor z_j) \\
    (\lnot x_{1} \lor \phantom{\lnot} x_{2} \lor \lnot x_{3} \lor \phantom{\lnot} x_{4} &&\lor&& \lnot y_{3, 1} \lor \phantom{\lnot} y_{3, 2} \lor \phantom{\lnot} y_{3, 3} \lor \phantom{\lnot} y_{3, 4} \lor \phantom{\lnot} y_{3, 5} &&\lor&& z_1 \lor \ldots \lor z_j) \\
    (\lnot x_{1} \lor \phantom{\lnot} x_{2} \lor \phantom{\lnot} x_{3} \lor \lnot x_{4} &&\lor&& \lnot y_{4, 1} \lor \lnot y_{4, 2} \lor \phantom{\lnot} y_{4, 3} \lor \phantom{\lnot} y_{4, 4} \lor \phantom{\lnot} y_{4, 5} &&\lor&& z_1 \lor \ldots \lor z_j) \\
    (\phantom{\lnot} x_{1} \lor \lnot x_{2} \lor \lnot x_{3} \lor \phantom{\lnot} x_{4} &&\lor&& \lnot y_{5, 1} \lor \lnot y_{5, 2} \lor \lnot y_{5, 3} \lor \phantom{\lnot} y_{5, 4} \lor \phantom{\lnot} y_{5, 5} &&\lor&& z_1 \lor \ldots \lor z_j) \\
    (\phantom{\lnot} x_{1} \lor \lnot x_{2} \lor \phantom{\lnot} x_{3} \lor \lnot x_{4} &&\lor&& \lnot y_{6, 1} \lor \lnot y_{6, 2} \lor \lnot y_{6, 3} \lor \lnot y_{6, 4} \lor \phantom{\lnot} y_{6, 5} &&\lor&& z_1 \lor \ldots \lor z_j) \\
    (\phantom{\lnot} x_{1} \lor \phantom{\lnot} x_{2} \lor \lnot x_{3} \lor \lnot x_{4} &&\lor&& \lnot y_{7, 1} \lor \lnot y_{7, 2} \lor \lnot y_{7, 3} \lor \lnot y_{7, 4} \lor \lnot y_{7, 5} &&\lor&& z_1 \lor \ldots \lor z_j) \\
    (\lnot x_{1} \lor \lnot x_{2} \lor \lnot x_{3} \lor \lnot x_{4} &&\lor&& \lnot y_{8, 1} \lor \lnot y_{8, 2} \lor \lnot y_{8, 3} \lor \lnot y_{8, 4} \lor \lnot y_{8, 5} &&\lor&& z_1 \lor \ldots \lor z_j)
\end{align*}
    where the new variables $y$ and $z$ are unique for each vertex and do not appear in any other clauses.
    Eight new clauses are constructed to satisfy the following two properties:
    \begin{itemize}
        \item All eight clauses have the same length, and this length is unique to the vertex.
        \item The numbers of negated literals are distinct in all eight clauses.
    \end{itemize}
    Finally, we add unit clauses for the new variables so that the corresponding literals in the other clauses get falsified, and we call the resulting formula $\tau'_n$.

    To prove the separation, we need to show that $S_{\SRCI[164]}(\tau'_n)$ is polynomial and $S_{\SRII}(\tau'_n) \geq S_{\Res}(T_n)$.
    For the first part, it is sufficient to show that symmetries used to construct the $\SRCI[4]$ proof of $T_n$ could be extended to $164$-symmetries of $\tau'_n$.
    Since such symmetries are just flipping variables along cycles of length $4$, we only need to show that the operation of flipping two variables $x_{e_{j, a}}$ and $x_{e_{j, b}}$ could be correctly extended to a symmetry of all eight new clauses of $j$-th vertex.
    Indeed, all variables $z$ can be trivially mapped to themselves, and the mapping on $y$ can be defined straightforwardly because all $40$ such variables are distinct.
    The total size of such symmetry is $164$ because we need to flip $4$ variables $x$, nontrivially map $40$ variables $y$ per each of $4$ vertices in the cycle.

    For the lower bound, we need to show that all local positive symmetries in $\tau'_n$ map clauses containing original variables $x$ to themselves.
    This is because the pair of the length of a clause and the number of negations in it is unique to each clause with original variables $x$, and both these metrics are invariant under positive symmetries.
    Thus, we can apply Theorem~\ref{thrm:id_clauses}.
\end{proof}

\begin{remark}
    The number $164$ in the previous theorem is not optimal, and it can be reduced by a more careful construction.
    For example, replacing the rectangular grid with a hexagonal grid (dual to the graph of $T'_n$) would reduce this number to $54$ as we would need only $8$ variables $y$ per vertex.
    However, doing this would require a lot of technical work, and it would not add much to the understanding of the problem, so we do not pursue this direction.
\end{remark}

As an immediate corollary, we obtain the following separation that was not known before.

\begin{corollary}\label{cor:last2}
    There exists a family of formulas $\tau'_n$ such that the size of the shortest $\SRCI$-proof of $\tau'_n$ is polynomial, but the size of the shortest $\SRII$-proof is exponential.
\end{corollary}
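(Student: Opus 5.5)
This follows directly from Theorem~\ref{thrm:last2}; the only work is to check that the two bounds pass to the unrestricted systems. We take the same family $\tau'_n$ constructed there.

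For the upper bound, note that $\SRCI[164]$ is a restriction of $\SRCI$. Every $164$-symmetry is in particular a complementary symmetry in the classical sense, i.e.\ the case $k = n$ of the definition. Hence any $\SRCI[164]$-proof of $\tau'_n$ is verbatim an $\SRCI$-proof. This gives $S_{\SRCI}(\tau'_n) \le S_{\SRCI[164]}(\tau'_n)$, which is polynomial by Theorem~\ref{thrm:last2}.

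For the lower bound, Theorem~\ref{thrm:last2} already gives $S_{\SRII}(\tau'_n) \ge S_{\Res}(T_n)$. Its argument did not use any bound on the size of the symmetries. It used only two facts: positive symmetries preserve clause length and the number of negated literals, and the pair (length, number of negations) is unique to each clause containing original variables. Therefore every local positive symmetry fixes these clauses, Theorem~\ref{thrm:id_clauses} applies, and the $\SRII$ lower bound is exactly what the theorem states. We then invoke the known exponential Resolution lower bound for Tseitin contradictions on the toroidal grid $T_n$ (e.g.\ \cite{Urquhart87,BSW01}, or Theorem~\ref{thrm:haastad} with $d$ constant). This gives $S_{\SRII}(\tau'_n) = 2^{\Omega(n)}$.

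The only point needing care is the parameter: the number of variables of $\tau'_n$ is polynomial in $n$. In particular, the $z$-variables contribute $\BigO(n^4)$ in total, since vertex $j$ gets $j \le n^2$ of them. Hence the polynomial and exponential bounds remain polynomial and exponential when measured in the formula size. I expect no real obstacle here. The substantive step, the uniqueness of the (length, number of negations) invariant, was already established in Theorem~\ref{thrm:last2}.
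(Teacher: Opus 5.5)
Your proposal is correct and follows the paper's route exactly: the corollary is immediate from Theorem~\ref{thrm:last2}, because every $\SRCI[164]$-proof is already an $\SRCI$-proof, and the lower bound $S_{\SRII}(\tau'_n) \geq S_{\Res}(T_n)$ there is stated for unrestricted $\SRII$. One citation slip: \cite{Urquhart87} concerns expander graphs, the size--width tradeoff of \cite{BSW01} gives nothing for the grid (width $\Theta(n)$ against $\Theta(n^2)$ variables), and Theorem~\ref{thrm:haastad} yields only $\exp(n^{\Omega(1)})$, so your claimed $2^{\Omega(n)}$ needs a grid-specific Resolution bound such as Dantchev and Riis (FOCS 2001) rather than the sources you cite.
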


\section{Conclusion and Open Problems}\label{sec:open}

We have shown that the systems $\SRCI[3]$ and $\SRI[6]$ are stronger than Resolution.
The question of whether systems with smaller constraints are stronger than Resolution is open.
It seems plausible that a separation exists even for $\SRCII[1]$ and Resolution.
Separating $\SRCIII[2]$ from Resolution would also be of independent interest: while $2$-symmetries may efficiently encode parities of some variables, an efficient decoding of those parities from a polynomial-size formula often requires larger symmetries.

For proof systems with global symmetries and for systems with at least logarithmic symmetries, we have shown that the addition of a constant to the size of allowed symmetries makes the system stronger.
However, for systems with sublogarithmic local symmetries, we only have a separation result with respect to a multiplicative factor, but not an additive constant.
It seems natural for this case to also have the separation for the additive constant, so it would be interesting to find it.

Although in this paper we focused on static symmetries, it would be interesting to study the case of small dynamic symmetries as well.
Even the simplest case of dynamic symmetries, $\SRCIII[1]$, seems to be very nontrivial.
1-symmetries grant the system the power to express, although with some restrictions, conjunctions of sets of variables.
For instance, $(x \land y)$ could be expressed as a \emph{single} clause $(x \lor y)$ with two symmetries negating $x$ and $y$.
It raises the potential power of the system to $\Res(k)$ (a version of the Resolution proof system that works with k-DNFs instead of clauses).
However, due to the structure of the proof with symmetries, it is not possible to operate on the conjunctions freely, so both upper and lower bounds for $\SRCIII[1]$ are interesting open problems.

Additionally, the notion of small symmetries could also be naturally applied to systems that are stronger than Resolution.
Examples of such systems are: Cutting Planes, $\Res(k)$, or even constant-depth Frege.
It looks hard to prove lower bounds for these systems augmented with large symmetries, but proving lower bounds for systems with small symmetries only seems more feasible.

\section{Acknowledgements}\label{sec:acknowledgements}

The author is very grateful to Edward~A.~Hirsch and Ofer Strichman, who supervised this work, in particular, introduced the author to the use of symmetries in proof search, guided him through numerous discussions, and gave invaluable comments on the presentation of the results.
The author also thanks Arthur Riazanov for fruitful discussions.

\bibliography{main}

\end{document}